%% PNAStmpl.tex

%% Template file to use for PNAS articles prepared in LaTeX
%% Version: Apr 14, 2008

%%%%%%%%%%%%%%%%%%%%%%%%%%%%%%
%% BASIC CLASS FILE 
%% PNAStwo for two column articles is called by default.
%% Uncomment PNASone for single column articles. One column class
%% and style files are available upon request from pnas@nas.edu.
%% (uncomment means get rid of the '%' in front of the command)

%\documentclass{pnasone}
\documentclass{revtex4-1}
\usepackage[T1]{fontenc}
\usepackage{hyperref}
\usepackage{color}
%%%%%%%%%%%%%%%%%%%%%%%%%%%%%%
%% Changing position of text on physical page:
%% Since not all printers position
%% the printed page in the same place on the physical page,
%% you can change the position yourself here, if you need to:

% \advance\voffset -.5in % Minus dimension will raise the printed page on the 
                         %  physical page; positive dimension will lower it.

%% You may set the dimension to the size that you need.
%%%%%%%%%%%%%%%%%%%%%%%%%%%%%%
%% OPTIONAL GRAPHICS STYLE FILE

%% Requires graphics style file (graphicx.sty), used for inserting
%% .eps files into LaTeX articles.
%% Note that inclusion of .eps files is for your reference only;
%% when submitting to PNAS please submit figures separately.

%% Type into the square brackets the name of the driver program 
%% that you are using. If you don't know, try dvips, which is the
%% most common PC driver, or textures for the Mac. These are the options:

% [dvips], [xdvi], [dvipdf], [dvipdfm], [dvipdfmx], [pdftex], [dvipsone],
% [dviwindo], [emtex], [dviwin], [pctexps], [pctexwin], [pctexhp], [pctex32],
% [truetex], [tcidvi], [vtex], [oztex], [textures], [xetex]

\usepackage{graphicx}
%\usepackage{showlabels}
%%%%%%%%%%%%%%%%%%%%%%%%%%%%%%
%% OPTIONAL POSTSCRIPT FONT FILES

%% PostScript font files: You may need to edit the PNASoneF.sty
%% or PNAStwoF.sty file to make the font names match those on your system. 
%% Alternatively, you can leave the font style file commands commented out
%% and typeset your article using the default Computer Modern 
%% fonts (recommended). If accepted, your article will be typeset
%% at PNAS using PostScript fonts.

% Choose PNASoneF for one column; PNAStwoF for two column:
%\usepackage{PNASoneF}
%\usepackage{PNAStwoF}

%%%%%%%%%%%%%%%%%%%%%%%%%%%%%%
%% ADDITIONAL OPTIONAL STYLE FILES

%% The AMS math files are commonly used to gain access to useful features
%% like extended math fonts and math commands.

\usepackage{amssymb,amsfonts,amsmath,amsthm}
\usepackage{dsfont}
\DeclareMathAlphabet{\varmathbb}{U}{bbold}{m}{n}

%%%%%%%%%%%%%%%%%%%%%%%%%%%%%%
%% OPTIONAL MACRO FILES
%% Insert self-defined macros here.
%% \newcommand definitions are recommended; \def definitions are supported

 %{\mathcal{L}}

\DeclareMathOperator*{\tr}{\mathrm{tr}}

\newcommand{\Exp}{\mathbb{E}}

\newcommand{\beq}{\begin{equation}} 
\newcommand{\eeq}{\end{equation}}
\newcommand{\bem}{\begin{multline}}
\newcommand{\bes}{\begin{split}} \newcommand{\ees}{\end{split}} 
\newcommand{\bea}{\begin{eqnarray}} \newcommand{\eea}{\end{eqnarray}}
\def\tq{\widetilde{q}}
\def\tp{\widetilde{p}}
\def\tr{\widetilde{r}}
\def\dd{{\rm d}}
\def\la{\langle}
\def\ra{\rangle}
\def\Z{{\cal Z}}

\theoremstyle{plain}
\newtheorem{thm}{\protect\theoremname}
\theoremstyle{remark}
\newtheorem*{rem*}{\protect\remarkname}
\theoremstyle{plain}
\newtheorem{cor}[thm]{\protect\corollaryname}
\theoremstyle{remark}

\providecommand{\corollaryname}{Corollary}
\providecommand{\remarkname}{Remark}
\providecommand{\theoremname}{Theorem}
%%%%%%%%%%%%%%%%%%%%%%%%%%%%%%
%% Don't type in anything in the following section:
%%%%%%%%%%%%
%% For PNAS Only:
%\contributor{Submitted to Proceedings 
%of the National Academy of Sciences of the United States of America}
%\url{www.pnas.org/cgi/doi/10.1073/pnas.0709640104}
%\copyrightyear{2008}
%\issuedate{Issue Date}
%\volume{Volume}
%\issuenumber{Issue Number}
%%%%%%%%%%%%

\begin{document}

%%%%%%%%%%%%%%%%%%%%%%%%%%%%%%

%% For titles, only capitalize the first letter
%% \title{Almost sharp fronts for the surface quasi-geostrophic equation}

\title{Network dismantling}

%% Enter authors via the \author command.  
%% Use \affil to define affiliations.
%% (Leave no spaces between author name and \affil command)

%% Note that the \thanks{} command has been disabled in favor of
%% a generic, reserved space for PNAS publication footnotes.

%% \author{<author name>
%% \affil{<number>}{<Institution>}} One number for each institution.
%% The same number should be used for authors that
%% are affiliated with the same institution, after the first time
%% only the number is needed, ie, \affil{number}{text}, \affil{number}{}
%% Then, before last author ...
%% \and
%% \author{<author name>
%% \affil{<number>}{}}

%% For example, assuming Garcia and Sonnery are both affiliated with
%% Universidad de Murcia:
%% \author{Roberta Graff\affil{1}{University of Cambridge, Cambridge,
%% United Kingdom},
%% Javier de Ruiz Garcia\affil{2}{Universidad de Murcia, Bioquimica y Biologia
%% Molecular, Murcia, Spain}, \and Franklin Sonnery\affil{2}{}}
\author{
Alfredo Braunstein}
  \affiliation{Politecnico di Torino, Torino, Italy}
  \affiliation{Human Genetics Foundation, Torino, Italy}
  \affiliation{Collegio Carlo Alberto, Moncalieri, Italy}
\author{
Luca Dall'Asta}
  \affiliation{Politecnico di Torino, Torino, Italy}
  \affiliation{Collegio Carlo Alberto, Moncalieri, Italy}
\author{
Guilhem Semerjian}
  \affiliation{LPTENS, Ecole Normale Sup\'erieure, PSL Research University, Sorbonne Universit\'es, UPMC Univ Paris 06, CNRS UMR 8549, 24 Rue Lhomond, 75005
 Paris, France}
\author{
Lenka Zdeborov\'a}
  \affiliation{Institut de Physique Th\'eorique, CNRS, CEA and Universit\'e Paris-Saclay, Gif-sur-Yvette, France.}

%\contributor{Submitted to Proceedings of the National Academy of Sciences of the United States of America}

%% The \maketitle command is necessary to build the title page.

%%%%%%%%%%%%%%%%%%%%%%%%%%%%%%%%%%%%%%%%%%%%%%%%%%%%%%%%%%%%%%%%

\begin{abstract} 
We study the network dismantling problem, which consists in 
determining a minimal set of vertices whose removal leaves the network 
broken into connected components of sub-extensive size. 
For a large class of random graphs,
this problem is tightly connected to the decycling problem (the
removal of vertices leaving the graph acyclic). Exploiting this
connection and recent works on epidemic spreading we present precise predictions for the minimal size of a
dismantling set in a large random graph with a prescribed
(light-tailed) degree distribution. Building on the statistical
mechanics perspective we propose a three-stage Min-Sum algorithm for efficiently
dismantling networks, including heavy-tailed ones for which the
dismantling and decycling problems are not equivalent. We also provide
further insights into the dismantling problem concluding that it is
an intrinsically collective problem and that optimal dismantling sets cannot
be viewed as a collection of individually well performing nodes. 
\end{abstract}

%% When adding keywords, separate each term with a straight line: |
\keywords{percolation | optimal influencers | optimal spreaders |
  message passing | random graphs | graph fragmentation}
\maketitle

%% Optional for entering abbreviations, separate the abbreviation from
%% its definition with a comma, separate each pair with a semicolon:
%% for example:
%% \abbreviations{SAM, self-assembled monolayer; OTS,
%% octadecyltrichlorosilane}

% \abbreviations{}

%% The first letter of the article should be drop cap: \dropcap{}
%\dropcap{I}n this article we study the evolution of ''almost-sharp'' fronts

%% Enter the text of your article beginning here and ending before
%% \begin{acknowledgements}
%% Section head commands for your reference:
%% \section{}
%% \subsection{}
%% \subsubsection{}

%\dropcap{S}parse .... 

% Significance: Many systems of interest can be represented by
% a network of nodes connected by edges. In many circumstances the
% existence of a giant component is necessary for the network to
% fulfill its function. 
% Motivated by the need to understand optimal attack strategies,
% optimal spread of information or immunization policies, 
% we study the network dismantling problem,
% i.e. the search of a minimal set of nodes whose removal leaves the network
% broken into components of sub-extensive size. We give the size of the optimal dismantling set for random networks,
% propose an efficient dismantling algorithm for general networks that outperforms by a large margin
% existing strategies, and provide various insights about the problem. 

%\vspace{-0.5cm}

\section{Introduction}

A network (a graph $G$ in the discrete mathematics language) is a set $V$ of $N$ 
entities called nodes (or vertices), along with a set $E$ of edges connecting some 
pairs of nodes.
In a simplified way, networks are used to describe numerous
systems in very diverse fields, ranging from social sciences 
to information technology or biological systems, 
see~\cite{Boccaletti_review,Barrat_book} for reviews. 
Several crucial questions in the context of network studies concern the
modifications of the properties of a graph when a subset $S$ of its nodes 
is selected and treated in a specific way. For instance: 
How much does the size of the largest connected component of 
the graph decreases if the vertices in~$S$ (along with their adjacent edges) 
are removed? Do the cycles survive this removal? What is the 
outcome of the epidemic spreading if the vertices in~$S$ are initially 
contaminated, constituting the seed of the epidemic? On the contrary, what is 
the influence of a vaccination of nodes in~$S$ preventing them from transmitting 
the epidemic? 
It is relatively easy to answer these questions when the set $S$ is 
chosen randomly, with each vertex being selected with some probability independently. Classical percolation theory is 
nothing but the study of the connected components of a graph in which some
vertices have been removed in this way.

A much more interesting case is when the set $S$ can be 
chosen in some optimal way. Indeed, in all applications sketched above it is
reasonable to assign some cost to the inclusion of a vertex in~$S$: vaccination
has a socioeconomic price, incentives must be paid to customers to
convince them to adopt a new product in a viral marketing campaign, 
incapacitating a computer during a cyber attack requires resources.
Thus, one faces a combinatorial optimization problem: the minimization
of the cost of $S$ under a constraint on its effect on the graph. These 
problems thus exhibit both static and dynamic features, the former referring to the combinatorial 
optimization aspect, the latter to the definition of the cost function itself 
through a dynamical process.

In this paper we focus on the the existence of a giant
component in a network, that is the largest component containing a positive fraction of the vertices (in the
$N\to\infty$ limit). On the one hand, the existence of a giant component is often 
necessary for the network to fulfill its function (e.g. to deliver
electricity, information bits or ensure possibility of
transportation). An adversary might be able to destroy a set of nodes
with the goal of destroying this functionality. 
It is thus important to understand what is an optimal attack strategy,
possibly as a first step in the design of optimal defense strategies. 
On the other hand, a giant component can propagate an epidemic to a
large fraction of a population of nodes. Interpreting the removal of
nodes as the vaccination of individuals who cannot transmit the
epidemic anymore, destroying the giant component can be seen as an extreme way of organizing a vaccination campaign
\cite{pastor-satorras_immunization_2002,cohen_efficient_2003} by
confining the contagion to small connected components (less drastic strategies can be devised using
specific information about the epidemic propagation
model~\cite{britton_graphs_2007,vaccination_Torino}). 
Another related application is {\it influence maximization} as studied in many previous
works~\cite{Kempe,Chen,Dreyer09}. In particular, optimal destruction of the giant
component is equivalent to selection of the smallest 
set of initially informed nodes needed to spread the information into
the whole network under a special case of the commonly considered 
model for information spreading~\cite{Kempe,Chen,Dreyer09}. 
%The relation between a special case of influence
%maximization and the optimal destruction of the giant
%component was put forward in~\cite{morone_influence_2015}.

To define the main subject of this paper more formally, following~\cite{janson_dismantling_2008}, we call $S$
a {\it $C$-dismantling} set if its removal yields a graph whose largest
component has size (in terms of its number of nodes) at most $C$. The 
{\it $C$-dismantling number} of a graph is the minimal size of such a set.
When the value of $C$ is either clear from the context or not important for 
the given claim, we will simply talk about {\it dismantling}. Typically, the 
size of the largest component is a finite fraction of the total
number of nodes $N$. To formalize the notion of {\it destroying the giant
component} we will consider the bound $C$ on the size of the connected
components of the dismantled network to be such that
$1 \ll C \ll N$. 
It should be noted that we defined dismantling in terms of node
removal, it could be rephrased in terms of
edge-removal~\cite{BrMoYa13}, which turns out to be a much easier problem.
The dismantling problem is also referred to
as {\it fragmentability of graphs} in graph theory literature~\cite{edwards1994new,edwards_fragmentability_2001,edwards_planarization_2008}, and as
{\it optimal percolation} in~\cite{morone_influence_2015}.

Determining whether the 
$C$-dismantling number of a graph is smaller 
than some constant is an NP-complete decision problem
(for a proof see Appendix A). 
The concept of NP-completeness concerns the worst-case difficulty of
the problem. 
The questions we address in the present paper are instead the following: What is the
dismantling number on some representative class of graphs, in our case
random graphs? What are the best heuristic algorithms and how does
their performance compare to the optimum, and how do they perform on
benchmarks of real-world graphs?
Simple heuristic algorithms for the dismantling problem were considered in previous
works~\cite{albert_error_2000,callaway_network_2000,cohen_breakdown_2001},
where the choice of the nodes
to be included in the dismantling set was based on their degrees (favoring
the inclusion of the most connected vertices), or some measure of their
centrality. More recently, a heuristic for the dismantling problem
has been presented in~\cite{morone_influence_2015} under the name 
``collective'' influence, in which the inclusion of a node is decided according 
to a combination of its degree and the degrees of the nodes in a local 
neighborhood around it. Ref.~\cite{morone_influence_2015} also
attempts to estimate the dismantling number on random graphs. 

\bigskip

\vspace{-0.5cm}
\section{Our main contribution}

In this paper we provide a detailed study of the dismantling
problem, with both analytical and algorithmic outcomes. 
We present very accurate estimates of
the dismantling number for large random networks, building on a
connection with the decycling problem (in which one seeks a subset of
nodes whose removal leaves the graph acyclic, also an NP-complete problem~\cite{Karp1972})
and on recent studies of
optimal spreading~\cite{Torino1,Torino2,fvs_Zhou1,Guilhem14}. Our
results are the one-step replica symmetry broken estimate of the
ground state of the corresponding optimization problem. 

On the computational side, we introduce a very efficient
algorithm that outperforms considerably state-of-the-art algorithms
for solving the dismantling problem. We demonstrate its efficiency 
and closeness to optimality both on random graphs and on real world networks.
The goal of our paper is closely related to the one 
of~\cite{morone_influence_2015}, we present an assessment 
of the results reported therein, on random as well as on real world networks. 
%Our results, however, show that both the 
%theoretical and algorithmic results reported in~\cite{morone_influence_2015} 
%are significantly suboptimal, on random as well as on real world networks. 

Our dismantling algorithm, which has been inspired by the theoretical
insight gained on random graphs, is composed of three stages:
\begin{itemize}
   \item[(1)] {\bf Min-sum message passing for decycling.} This is the core
     of the algorithm, employing a variant of a message-passing
     algorithm developed in \cite{Torino1,Torino2}.
     A related but different message-passing algorithm 
     was developed for decycling in \cite{fvs_Zhou1}, and later applied to
     dismantling in \cite{Zhou_dis}, it performs comparably to ours.
   \item[(2)] {\bf Tree breaking.} Once all cycles are broken, some of
     the tree components may still be larger than the desired threshold $C$. 
     We break them into small components removing a fraction of nodes that
     vanishes in the large size limit. This can be done in time $O(N\log{N})$ 
     by an efficient greedy procedure (detailed in Appendix~\ref{sec_tree_breaking}).
   \item[(3)] {\bf Greedy reintroduction of cycles.} As explained below the strategy
     of first decycling a graph before dismantling it is the optimal one for
     graphs that contain few short cycles (a typical property of light-tailed
     random graphs). For graphs with many short cycles we improve considerably
     the efficiency of our algorithm by reinserting greedily some nodes
     that close cycles without increasing too much the size of the largest 
     component. 
\end{itemize}

The dismantling problem, as is often the case in combinatorial optimization,
exhibits a very large number of (quasi-)optimal solutions. We characterize
the diversity of these degenerate minimal dismantling sets by a detailed
statistical analysis, computing in particular the frequency of appearance of
each node in the quasi-optimal solutions, and conclude that dismantling is 
an intrinsically collective phenomenon that results from a correlated choice 
of a finite fraction of nodes. It thus makes much more sense to think in terms 
of {\it good dismantling sets} as a whole and not about individual nodes as the 
{\it optimal influencers/spreaders}~\cite{morone_influence_2015}. 
We further study the correlation between the degree of a node and its
importance for dismantling, exploiting a natural variant of our algorithm 
in which the dismantling set is required to avoid some marked nodes.
This allows us to show that each of the low degree nodes can be
replaced by other nodes without increasing significantly 
the size of the dismantling set. Contrary to claims 
in~\cite{morone_influence_2015} we do not confirm any particular importance 
of weak-nodes, apart from the obvious fact that the set of highest degree 
nodes is not a good dismantling set.  

To give a quantitative idea of our algorithmic contribution, we state 
two representative examples of the kind of improvement we obtain
with the above algorithm with respect to the 
state-of-the-art~\cite{morone_influence_2015}. 
\begin{itemize}
\item
In an Erd\H os-R\'enyi (ER) random graph of average degree 3.5 and 
size $N=5^7$ we found $C=1000$-dismantling sets 
removing $17.8\%$ of the nodes, whereas the best known method (adaptive 
eigenvalue centrality for this case) removes $20.2\%$ of the nodes, 
and the adaptive ``collective'' influence (CI) method 
of~\cite{morone_influence_2015} removes $20.6\%$ of the nodes. 
Hence, we provide a $13\%$ improvement over the state of the art. Our 
theoretical analysis estimates the optimum dismantling number
to be around $17.5\%$ of the nodes, thus the algorithm is extremely close to 
optimal in this case. 
\item Our algorithm managed to dismantle the Twitter network studied 
in~\cite{morone_influence_2015} (with $532,000$ nodes) into components smaller 
than $C=1000$ using only $3.4\%$ of the nodes, whereas the CI heuristics of
\cite{morone_influence_2015} needs $5.6\%$ of the nodes. Here we thus
provide a $60\%$ improvement over the state-of-the-art.  
\end{itemize}

Not only does our algorithm demonstrate beyond state-of-the-art
performance, but it is also computationally efficient. Its core
part runs in linear time over the number of edges, allowing 
us to easily dismantle networks with tens of millions of nodes. 

\vspace{-0.5cm}
\section{The relation between dismantling and decycling}

We begin our discussion by clarifying the relation between the
dismantling and decycling problems. While the argument below can be found
in~\cite{janson_dismantling_2008}, we reproduce
it here in a simplified fashion. The decycling number (or more precisely 
fraction)
$\theta_{\rm dec}(G)$ of $G$ is the minimal fraction of vertices that
have to be removed to make the graph acyclic.
We define similarly the dismantling number $\theta_{\rm dis}(G,C)$ of a graph
$G$ as the minimal fraction of vertices that have to be removed to
make the size of the largest component of the remaining graph smaller
than a constant $C$. 

For random graphs with degree distribution $q=\{q_k\}_{k \ge 0}$, in the large size limit, the parameters
$\theta_{\rm dec}$ and $\theta_{\rm dis}$ will enjoy concentration
(self-averaging) properties, we shall thus write their typical values as 
\bea
\theta_{\rm dec}(q) &=& \lim_{N \to \infty} \mathbb{E}[ \theta_{\rm
  dec}(G)] \ ,  \\
\theta_{\rm dis}(q) &=&  \lim_{C\to \infty} \lim_{N \to \infty} \mathbb{E}[ \theta_{\rm dis}(G,C)] \ ,
\eea
where $\mathbb{E}[\bullet]$ denotes an average over the random graph ensemble.
For the dismantling number we allow the connected components after the removal of a
dismantling set to be large but sub-extensive because of the order of
limits. It was proven in~\cite{janson_dismantling_2008} that
for some families of random graphs an
equivalent definition is $\lim_{\epsilon \to 0} \lim_{N \to \infty}
\mathbb{E}[ \theta_{\rm dis}(G,\epsilon N)]$, i.e. connected
components are allowed to be extensive but with a vanishing intensive size.

The crucial point for the relation between dismantling and decycling is that trees (or more generically
forests) can be efficiently dismantled. It was proven
in~\cite{janson_dismantling_2008} that 
$\theta_{\rm dis}(G,C)\le {1/(C+1)}$ whenever $G$ is a forest. This
means that the fraction of vertices to be removed from a forest
to dismantle it into components of size $C$ goes to zero when $C$
grows.

This observation brings us to the following two claims concerning the
dismantling and decycling numbers for random graphs with degree
distribution $q$: (i) for any degree distribution, 
$\theta_{\rm dis}(q) \le \theta_{\rm dec}(q)$; (ii) if $q$ also
admits a second moment (we shall call $q$ light-tailed when this is the case) then there is actually an equality
between these two parameters, $\theta_{\rm dis}(q) = \theta_{\rm dec}(q)$.

The first claim follows directly from the above observation on the
decycling number of forests. Once a decycling set $S$ of $G$ has been
found one can add to $S$ additional vertices to turn it into a
$C$-dismantling set, the additional cost being bounded as $\theta_{\rm
dis}(G,C) \le \theta_{\rm dec}(G) + {1}/{(C+1)}$. Taking averages of
this bound and the limit $C\to\infty$ after $N\to\infty$ yields
directly (i).

To justify our second claim, we consider a $C$-dismantling set $S$ of a graph $G$. To turn $S$ into a
decycling set we need to add additional vertices in order to break the
cycles that might exist in $G \setminus S$. The lengths of these cycles are certainly
smaller than $C$, and removing at most one vertex per cycle is enough
to break them. We can thus write $\theta_{\rm dec}(G) \le \theta_{\rm
  dis}(G,C)+{n_C(G)}/{N}$, with $n_C(G)$ denoting the number of cycles
of $G$ of length at most $C$. We recall that the existence of a second moment of $q$ implies that $n_C(G)$ remains bounded when $N\to\infty$ with
$C$ fixed. Considering the limit $N\to\infty$ and
property (i), property (ii) follows. 

\vspace{-0.5cm}
\section{Network decycling}

In this section, we shall explain the results on the decycling number
of random graphs we obtained via statistical mechanics methods, and
how they can be exploited to build an efficient heuristic algorithm
for decycling arbitrary graphs. 

\subsection{Testing the presence of cycles in a graph}

The 2-core of a graph $G$ is its largest subgraph of minimal degree 2; 
it can be constructed by iteratively removing isolated nodes and leaves 
(vertices of degree 1) until either all vertices have been
removed or all remaining vertices have degree at least 2. It is easy to see 
that a graph contains cycles if and only if its 2-core is non-empty.
To decide if a subset $S$ is decycling or not we remove the nodes in 
$S$ and perform this leaf
removal on the reduced graph. To formalize this we introduce binary variables
$x_i^t(S) \in \{0,1\}$ on each vertex $i \in V$ of the graph, $t$ being a
discrete time index. At the starting time $t=0$, one marks the
initially removed vertices by setting $x_i^0(S)=1$ if $i \in S$, 0
otherwise, and let the $x$ variables evolve in time according to
\begin{equation}
x_i^{t+1}(S)= \begin{cases} 1 & \text{if} \ x_i^t(S)=1\, , \\
\mathbb{I}\left[\underset{j\in\partial i}{\sum}(1-x_j^t(S))\leq 1 \right] &
\text{if} \ x_i^t(S)=0\, ,
\end{cases}
\label{eq:dynamics}
\end{equation}
where $\partial i=\{j:(ij)\in E\}$ denotes the local neighborhood of
vertex $i$ and $\mathbb{I}$ denotes the indicator function, that is 1
if its argument is true and 0 otherwise. One can check that the
$x_i$'s are monotonous in time (they can only switch from 0 to 1),
hence they admit a limit $x_i^*(S)$ when $t \to \infty$. At this fixed point $x_i^*(S)=0$ if and only if $i$ is
in the 2-core of $G \setminus S$, hence the sufficient and necessary
condition for $S$ to be a decycling set of $G$ is $x_i^*(S)=1$ for all
vertices $i$.

Note that the leaf-removal procedure can be equivalently viewed as
a particular case of the linear threshold model of epidemic
propagation or of information spreading. By
calling a removed vertex infected (or informed), one sees that the
infection (or information) of node $i$ occurs whenever the number of
its infected (or informed)
neighbors reaches its degree minus one. This equivalence, which was
already exploited in~\cite{Guilhem14,morone_influence_2015}, allows us to build on
previous works on minimal contagious
sets~\cite{Torino1,Torino2,Guilhem14} and on influence maximization ~\cite{Kempe,Chen,Dreyer09}. 

\subsection{Optimizing the size of decycling sets}
From the point of view of statistical mechanics, it is natural to
introduce the following probability distribution over the subsets 
$S$ in order to find the optimal decycling sets of a given graph
\begin{equation}\label{eq:pmeasure}
\widehat{\eta}(S) = \frac{1}{Z(\mu)} e^{ \mu |S|} \prod_{i \in V}
\mathbb{I}[x_i^*(S)=1] \ ,
\end{equation}
where $|S|$ denotes the number of vertices in $S$, $\mu$ is a real
parameter to be interpreted as a chemical potential (or an inverse temperature), and the
partition function $Z(\mu)$ normalizes this probability
distribution. 
From the preceding discussion, this measure gives a
positive probability only to decycling sets, and their minimal size
can be obtained as the ground state energy in the zero-temperature
limit:
\begin{equation}
\theta_{\rm dec}(G) = \frac{1}{N} \lim_{\mu \to -\infty}
\frac{1}{\mu} \ln Z(\mu) \ .
\end{equation}
The computation of this partition function remains at this
point a difficult problem, in particular the variables $x_i^*$ depend on the choice of $S$
in a non-local way. One can get around this difficulty in the
following way: as the evolution of $x_i^t$ is monotonous in time it
can be completely described by a single integer, 
$t_i(S)=\min\{t : x_i^t(S)=1\} $, the time at which $i$ is removed in
the parallel evolution described above. Note that $t_i(S)=0$ if and
only if $i \in S$, $t_i(S)>0$ otherwise. We use the natural convention
$\min\emptyset=\infty$, hence the nodes $i$ in the 2-core of
$G\setminus S$ are precisely those with an infinite removal time $t_i(S)=\infty$.
The crucial advantage of this equivalent representation in terms of the
activation times is its locality along the graph. Indeed, the dynamical evolution rule \eqref{eq:dynamics} can be
rephrased as static equations linking the times $t_i$ on neighboring
vertices: 
\begin{equation}
t_i (S)= \begin{cases} 0 & \text{if} \ \ i \in S\, , \\
\phi_i(\{t_j\}_{j \in \partial i})
& \text{if}  \ \ i \in V
\setminus S \ ,
\end{cases}  
\label{eq:dynamics-trajectories}
\eeq
\beq
\text{with} \ \ \phi_i(\{t_j\}_{j \in \partial i}) = 1+ {\rm max}_2(\{t_j(S)\}_{j \in \partial i} )
\end{equation}
where we denote $\max_2$ the second largest of the arguments
(reordering them as $t_1 \ge t_2 \ge \dots \ge t_n$ one defines
$\max_2(t_1,\dots,t_n)=t_2$). In the leaf-removal procedure, one
vertex is removed in the first step following the time at which all
but one of its neighbors have been removed, making it a leaf.
The set of equations \eqref{eq:dynamics-trajectories} admits a unique
solution for each $S$, hence the partition function can be rewritten as:
\beq
Z(\mu) = \sum_{\{t_i\}} e^{ \mu \sum_i \psi_i(t_i)} 
\prod_{i \in V} \mathbb{I}[t_i < \infty] 
\prod_{i \in V}  \Phi(t_i,\{t_j\}_{j\in \partial i})\ , \label{eq:Z_local} 
\eeq
with $\psi_{i}\left(t_{i}\right)=\mathbb{I}\left[t_{i}=0\right]$, and
$\Phi(t_i,\{t_j\}_{j\in \partial i})=\mathbb{I}[t_i =0] + \mathbb{I}[t_i = \phi_i(\{t_j\}_{j \in \partial i})] $. We have thus obtained an exact
representation of the generating function counting the number of
decycling sets according to their size as a statistical mechanics model of
variables (the $t_i$'s) interacting locally along the graph $G$. We transformed the non-equilibrium problem of
leaf removal into an equilibrium problem where the times of removal play the
role of the static variables. Note 
that Ref.~\cite{fvs_Zhou1}, which also estimates the decycling number, uses a simpler, but approximate,
representation, where one cycle may remain in every connected
component, and the correspondence between microscopic configurations and sets of removed vertices is many to one. The
domain of the variables $t_i$ should include all integers between $0$ and
the diameter of $G$, and the additional $\infty$ value. For practical
reasons, in the rest of this paper we restrict this set to $\{0,1,\dots,T,\infty\}$, where
$T$ is a fixed parameter, and project all $t_i$'s
greater than $T$ to $\infty$. This means that we
require $G\setminus S$ not only to be acyclic, but that its connected
components are trees of diameter at most $T$. For large enough values
of $T$ this additional restriction is inconsequential.

The exact computation of the partition function \eqref{eq:Z_local} for
an arbitrary graph remains an NP-hard problem. However, if
$G$ is a sparse random graph the large size limit of its free-energy
density $\ln Z(\mu)/(N\mu)$ can be computed by the cavity
method~\cite{cavity,MeMo_book}. The latter has been developed for
statistical mechanics models on locally tree-like graphs, such as 
light-tailed random graphs, for which the exactness of the cavity method
has been proven mathematically on several problems. 
The starting point of the method is based on the fact that light-tailed random
graphs converge locally to trees in their large size limit, hence
models defined on them can be treated with belief propagation (BP, also
called Bethe Peierls approximation in statistical mechanics).
In BP, a partition function akin to \eqref{eq:Z_local} is
computed via the exchange of {\it messages} between neighboring nodes.
In the present case, where an interaction in \eqref{eq:Z_local}
includes node $i$ and all its neighbors $j\in \partial i$, we
obtain a tree-like representation if we let pairs of variables
$t_i,t_j$ live on the edges and add consistency constraints on the
nodes. The BP message $\eta_{ij}(t_i,t_j)$ from $i$ to
$j\in \partial i$ is then a function of both the activation times $t_i$ and
$t_j$. This message is interpreted as the marginal probability law of
the local variables $t_i$ and $t_j$ in an amputated (cavity) graph in
which the interaction between $i$ and $j$ has been removed. 
Thanks to the locally tree-like character of the graph, some correlation-decay properties are
verified and allow a node's incoming messages to be treated as
independent. Under this assumption, the iterative BP equations~\cite{Torino1,Torino2,Guilhem14},
for decycling are written as
\begin{equation}
\eta_{ij}(t_i,t_j) \propto \! \! \! \! \! \! \sum_{\{t_k\}_{k\in\partial i\setminus
    j}} \! \! \!  \! \! \!  e^{ \mu \psi_i(t_i)} 
\Phi(t_i,\{t_k\}_{k\in \partial i})
\! \! \!   \prod_{k\in\partial i\setminus
  j} \! \! \! \eta_{ki}(t_k,t_i) \ ,  \label{eq:BPeq}
\end{equation}
the $\propto$ symbol includes a multiplicative normalization constant.
The free-energy can then be computed as a sum of local
contributions depending on the messages solution of the BP
equations.

Better parametrizations with a number of real values per message that
scales linearly with $T$ (rather that quadratically) can be devised
\cite{Torino2, Guilhem14}. A parametrization with $2T$ real values per
message was introduced in \cite{Guilhem14} and was employed to obtain
improved results for the minimum decycling set on regular random
graphs by extending the cavity method to the so-called first level of
the replica symmetry breaking (1RSB) scheme.  
The extension of this calculation to random graphs with
arbitrary light-tailed degree distributions is reported in the Appendix~\ref{sec_analytic}
(along with expansions close to the percolation threshold and at large
degrees, and a lower bound on $\theta_{\rm dec}$ valid for all graphs). The
1RSB predictions for the decycling 
 fraction $\theta_{\rm dec}(d)$ of Erd\H os-R\'enyi
 random graphs with average degree $d$, obtained solving numerically
 the corresponding equations and extrapolating the 
results in the large $T$ limit, are presented for a few values of $d$
in Table~\ref{tab:ER1RSB}. 

\begin{table*}
\begin{tabular}{ccc}
$d$ & $\theta_{\rm dec}(d)$ & $\theta^{\rm MS}_{\rm dec}(d)$ \\
\hline
\hline
1.5 & 0.0125 & 0.0135\\
\hline
2.5 & 0.0912 & 0.0936\\
\hline
3.5 & 0.1753 & 0.1782\\
\hline
5 & 0.2789 & 0.2823\\
\end{tabular}
\caption{The (1RSB) cavity predictions for the decycling number of
  Erd\"{o}s-R\'enyi random graphs of average degree $d$, and the decycling number reached by the Min-Sum algorithm on graphs of size $N=10^7$ nodes.\label{tab:ER1RSB}}
\end{table*}

\subsection{Min-Sum algorithm for the decycling problem}
We turn now to the description of our heuristic algorithm for finding
decycling sets of the smallest possible size. The above analysis shows
the equivalence of this problem with the minimization of the cost
function $\sum_i \psi_i(t_i)$ over the feasible configurations of the
activation times $\{t_i\}\in\{0,\dots,T\}^V$, where feasible means
that, for all vertices $i$, either $t_i=0$ (then $i$ is included in the decycling set
$S$) or if $t_i>0$ it obeys the constraint 
$t_i = 1+ {\rm max}_2(\{t_j\}_{j \in \partial i} )$. Since this
minimization is NP-hard, we formulate a heuristic strategy in the
following manner. We first consider a slightly modified cost function with
$\psi_{i}\left(t_{i}\right)=\mathbb{I}\left[t_{i}=0\right] +
\varepsilon_i(t_i)$, where $\varepsilon_i(t_i)$ is a randomly chosen
infinitesimally small cost associated with the removal of node $i$ at
time $t_i$. The minimum $\{t_i^*\}$ of this cost function is now unique with
probability 1, and can be constructed as $t_i^*={\rm argmin} \, h_i(t_i)$,
where the {\it field} $h_i(t_i)$ is the minimum cost among the feasible
configurations with a prescribed value for the removal time $t_i$ of
site $i$. From the solution of this combinatorial optimization problem
we construct one of the minimal decycling sets $S$ by including vertex
$i$ in $S$ if and only if $t_i^*=0$. It remains now to find a good
approximation for $h_i$; we compute it by the Min-Sum (MS) algorithm, which
corresponds to the $\mu\to -\infty$ limit of BP and is similarly based on the exchange of messages $h_{ij}(t_i,t_j)$ between neighboring
vertices, an analog of $\eta_{ij}(t_i,t_j)$, but interpreted as a minimal cost
instead of a probability. We defer to the Appendix~\ref{sec_MS} for a full derivation and
implementation details, stating here only the final equations. For $T \ge t_i>0$:
\begin{subequations}
\begin{eqnarray}
  h_{i}\left(t_{i}\right) & = & \psi_{i}\left(t_{i}\right) +
  \sum_{k\in\partial i} L_{ki}(t_i)+M_i\left(t_{i}\right)\, ,\label{eq:field1b-main}\\
  h_{i}\left(0\right) & = & \psi_{i}\left(0\right) +
  \sum_{k\in\partial i} R_{ki}(0)\, ,\label{eq:field2b-main}
\end{eqnarray}
\end{subequations}
where 
\beq
M_i(t_i)=\min\{0,\min_{k\in\partial i}
\{R_{ki}(t_i)-L_{ki}(t_i)\}\} \ , 
\eeq
and $L_{ij}$, $R_{ij}$, $M_{ij}$,
$h^0_{ij}$, and $h^1_{ij}$ form a solution of the following system of
fixed-point equations for messages defined on each directed edge $i
\to j$ of the graph: 
\begin{subequations}
\begin{eqnarray}
L_{ki}\left(t_{i}\right) & = &
\min_{t_{k}<t_{i}}h_{ki}^{0}\left(t_{k}\right) \, ,\label{eq:ms_L} \\
R_{ki}\left(t_{i}\right) & = & \min\left\{
  h_{ki}^{0}\left(t_{i}\right),\min_{t_{k}>t_{i}}h_{ki}^{1}\left(t_{k}\right)\right\}\, ,  \label{eq:ms_R}
\\
M_{ij}\left(t_{i}\right) & = & \min\left\{0,\min_{k\in\partial
    i\setminus j} \{R_{ki}(t_i)-L_{ki}(t_i)\}\right\}  \, ,\label{eq:ms_M}\\
h_{ij}^{0}\left(t_{i}\right) & \propto & \psi_{i}\left(t_{i}\right)+
\sum_{k\in\partial i\setminus j} L_{ki}(t_i) \, , \label{eq:ms_h0t}\\
h_{ij}^{1}\left(t_{i}\right) & \propto & \psi_{i}\left(t_{i}\right)+
\sum_{k\in\partial i\setminus j} L_{ki}(t_i)+M_{ij}\left(t_{i}\right)
\, ,\label{eq:ms_h1t}\\
h_{ij}^{0}\left(0\right) & \propto & \psi_{i}\left(0\right) +
\sum_{k\in\partial i\setminus j} R_{ki}(0) \, , \label{eq:ms_h00}
\end{eqnarray}
\end{subequations}
where $\propto$ includes now an additive normalization constant.
An intuitive interpretation of all these quantities and equations is
provided in the Appendix~\ref{sec_MS}, let us only mention at this point that the message
$h_{ij}^{0}\left(t_{i}\right)$ (resp. $h_{ij}^{1}\left(t_{i}\right)$)
is the minimum feasible cost on the connected component of $i$ in
$G\setminus j$, under the condition that $i$ is removed at time $t_i$
in the original graph assuming that $j$ is not removed yet
(resp. assuming that $j$ is already removed from $G$). 

This system can be solved efficiently by iteration. The computation of one iteration
takes $O\left(\left|E\right|T\right)$ elementary ($+$, $-$, $\times$,
$\min$) operations, where $|E|$ denotes the number of edges of the graph, and
a relatively small number of iterations are usually sufficient to
reach convergence.  
In principle one should take the cutoff $T$ on the removal
times to be greater than $N$ in order to solve the decycling problem,
we found however that using large but finite values of $T$ (i.e. constraining
the diameter of the tree components after the node removal) did not
increase extensively the size of the decycling set; in the simulations
presented below we used $T=35$. 
Note that our algorithm is very flexible and many variations can be
implemented by appropriate modifications of the cost function. 
For example, we exploited the possibility to forbid the removal of
certain marked nodes $i$ by setting $\psi_i(t_i=0)=\infty$ for them.

\begin{figure}[htb]
%\vspace{-1mm}
\begin{center}
%\vspace{-7mm}
\includegraphics[width=0.5\columnwidth]{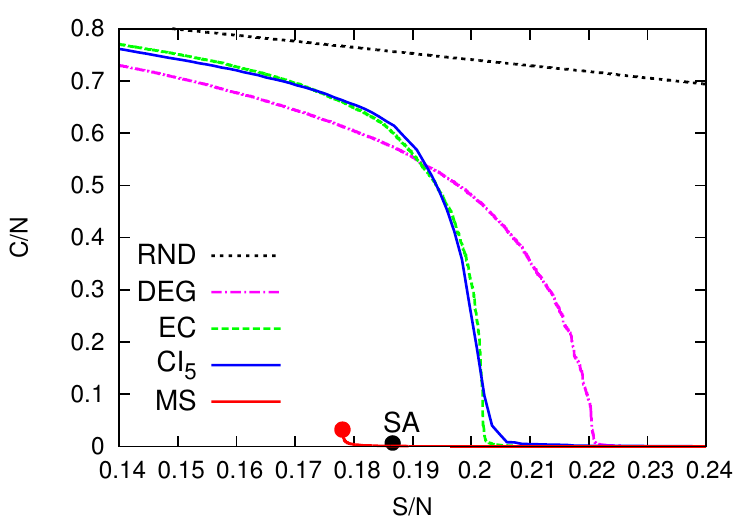}
\end{center}
\caption{Fraction of nodes in the largest component as a function of
  the fraction of removed nodes for an Erd\"{o}s-R\'enyi random graph
  of average degree $d=3.5$ and size $N=78,125$. We compare the result
  of our Min-Sum algorithm (MS) to random node removal (RND),  adaptive
  largest degree (DEG), adaptive
  eigenvector centrality (EC), adaptive ``collective'' influence
  centrality (CI) and Simulated Annealing (SA). \label{fig:threshold}} 
\end{figure}

\section{Results for dismantling}

\subsection{Results on random graphs} 
The outcome of our algorithm applied to an Erd\H os-R\'enyi random
graph of average degree $3.5$ is presented in
Figure~\ref{fig:threshold}. Here, the red point corresponds to the output
of its first stage (decycling with MS) which yields, after the removal
of a fraction $0.1781$ of the nodes, an acyclic graph whose
largest components contain a fraction $0.032$ of the vertices. 
The red line corresponds to the second stage, which
further reduces the size of the largest component by greedily breaking
the remaining trees. We compare to Simulated
Annealing (SA, black point) as well as to several incremental algorithms that successively
remove the nodes with the highest {\it scores}, where the score of a vertex is
a measure of its centrality. Besides a trivial function which gives
the same score to all vertices (hence removing the vertices in random
order, RND), and the score of a vertex equal to its degree (DEG), we used the
eigenvector centrality measure (EC) and the recently proposed Collective
Influence (CI) measure~\cite{morone_influence_2015}. We used all these
heuristics in an {\it adaptive} way, recomputing the scores after each
removal. Further details on all these algorithms can be found in Appendix~\ref{sec_otheralgo}. 

We see from the figure that the MS algorithm outperforms the others by a
considerable margin: it dismantles the graph using $13\%$ fewer
nodes than the CI method. The Monte Carlo based
SA algorithm performs rather well, but is considerably slower than all the others. 

\begin{figure}[htb]
\begin{center}
\includegraphics[width=0.5\columnwidth]{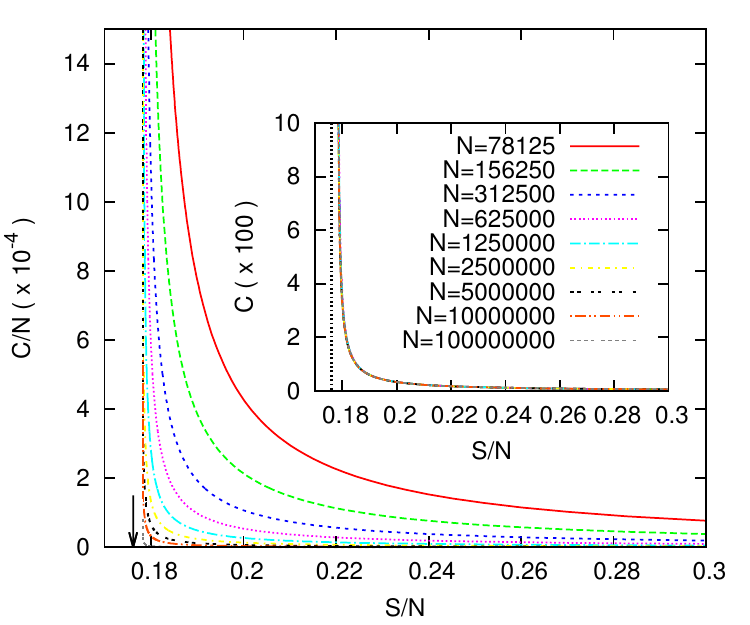}
\end{center}
\caption{Fraction of nodes in the largest component $C/N$ as a function of
  the fraction of nodes removed by the MS algorithm followed by
  the greedy tree breaking for the ER random graph
  of average degree $d=3.5$ and a range of sizes. Inset: Same
  plot for the size of the largest component. The collapse of the
  curves suggests that to reduce the largest component to a given
  constant size $C$, it is sufficient to remove $S = s N$ nodes, where $s=s(C)$ does not depend on $N$.\label{fig:scaling}} 
\end{figure}

In Fig.~\ref{fig:scaling} we zoom in on the results of the second
stage of our algorithm and perform a finite size scaling analysis,
increasing the size of the dismantled graphs up to $N=10^8$. In this way, we identify a threshold for decycling (and thus for dismantling)
by the MS algorithm that converges towards the value
$\theta_{\rm dec}^{MS} \approx 0.1782$, that is close but not equal
to the theoretical prediction of the 1RSB calculation $\theta_{\rm
  dec}^{\rm 1RSB} \approx 0.1753$ (vertical arrow). The inset of
Fig.~\ref{fig:scaling} shows a remarkable scaling that indicates that the size
of the largest component after dismantling by removing a given fraction of
nodes does not depend on the graph size.

Combinatorial optimization problems typically exhibit a very large
degeneracy of their (quasi)-optimal solutions. We performed a detailed
statistical analysis of the quasi-optimal dismantling sets constructed
by our algorithm, exploiting the fact that the MS algorithm finds
different decycling sets for different realizations of the random
tie-breaking noise $\varepsilon_i(t_i)$.

For a given ER random graph of average degree $3.5$ and size $N=78125$ we ran the
algorithm for 1000 different realizations of the tie-breaking
noise $\varepsilon_i(t_i)$ and obtained 1000
different decycling sets, all of which had sizes within 40 nodes of
one another. Randomly chosen pairs among these 1000
decycling sets coincided, on average, on 82\% of their nodes. 
For each node, we computed its frequency of appearance among
the 1000 decycling sets we obtained. We then ordered nodes by this frequency and plotted the
frequency as a function of this ordering in Fig.~\ref{fig:frequency}. 
We see that some nodes appear in almost all found sets, about $60\%$
of nodes does not appear in any, and a large portion of nodes appear only in a
fraction of the decycling sets. We compare the frequencies of nodes
belonging to one typical set found by MS and by the CI heuristics. 

\begin{figure}[htb]
\begin{center}
%\vspace{-7mm}
\includegraphics[width=0.5\columnwidth]{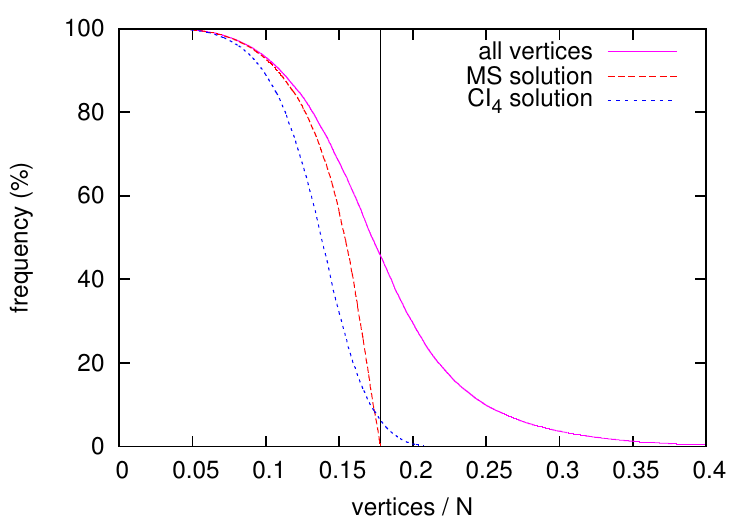}
\end{center}
\caption{Frequencies with which vertices appear in different
  decycling sets, on an ER graph,
  $N=78125, d=3.5$. The y-axis gives the frequency with which a given
  vertex appears in close-to-optimal decycling sets found by the MS
  algorithm. We ordered the vertices by this frequency and depict their
  ordering divided by $N$ on the x-axis. The different
  curves correspond to all vertices, vertices appearing in one
  randomly chosen decycling set found by the MS algorithm, and in one
  found by the CI algorithm. 
 \label{fig:frequency}} 
\end{figure}

An important question to ask about dismantling sets is whether they
can be thought of as a collection of nodes that are in some sense good
{\it spreaders} or whether they are a result of highly correlated
optimization. We use the result of the previous experiment and remove
the nodes that appeared most often, i.e have the highest frequencies in Fig.~\ref{fig:frequency}. If the nature of
dismantling was additive rather than collective then such a choice
should further decrease the size of the discovered dismantling
set. This is not what happens, with this strategy we need to remove
$20.1\%$ of nodes in order
to dismantle the graph, compared to the $17.8\%$ of nodes found
systematically by the MS algorithm. From this we conclude that dismantling is an intrinsically
collective phenomenon and one should always speak of the full {\it set} rather than of
a collection of {\it influential spreaders}.

We also studied the degree histogram of nodes that the MS algorithm
includes in the dismantling sets and saw that, as expected, most
of the high-degree nodes belong to most of the dismantling
sets. Each of the dismantling sets also included some nodes of relatively low degrees;
for instance, for an ER random graph of average degree $d = 6$ and size $5^7$
a typical decycling set found by the MS algorithm has around $460$ (i.e.
around $17\%$ of the decycling set) nodes of degree $4$ or lower. To assess the
importance of low degree nodes for dismantling, we ran the MS
algorithm under the constraint  that only nodes of degree at least $5$ can be removed, we find decycling sets almost as small (only
about 50 nodes, i.e. $0.2\%$ larger) as without this constraint.
From this we conclude that none of the low degree nodes (even those with high CI
centrality) is indispensable for dismantling, going against a
highlight claim of~\cite{morone_influence_2015}. 
 
\begin{figure}[htb]
%\vspace{-1mm}
\begin{center}
%\vspace{-7mm}
\includegraphics[width=0.5\columnwidth]{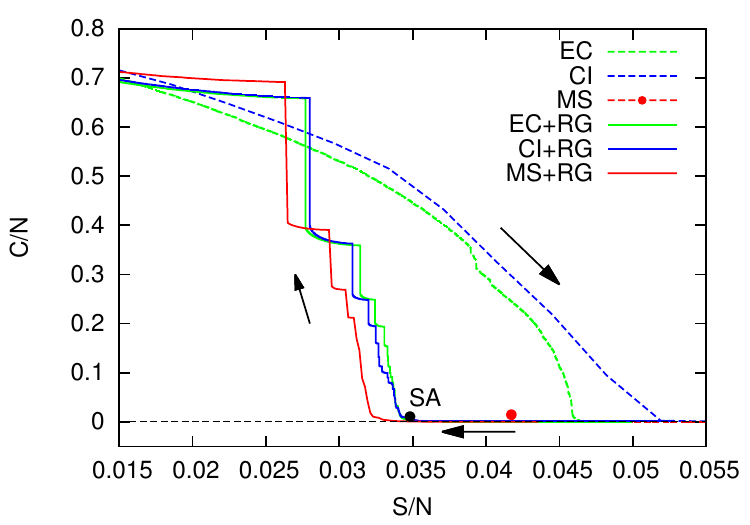}
\end{center}
\caption{Fraction $C/N$ taken by the largest component in the Twitter network achieved
  after removing a fraction $S/N$ of nodes using the Min-Sum algorithm
  (MS) and the adaptive versions of ``Collective'' Influence (CI) and Eigenvector Centrality (EC)
 measures. The red point marks the result obtained by decycling using
 MS (followed by the curve from the optimal tree-breaking
 process). The branches at lower values of $S/N$ are obtained after
 the application of the reverse greedy strategy (RG) from the graph
 obtained when the largest component has $C=100$ nodes. The black point denotes the dismantling fraction
  obtained by Simulated Annealing (SA).
 \label{fig:Twitter}} 
\end{figure}

\subsection{More general graphs}
Up to this point our study of dismantling relies crucially on the
relation to decycling. For light-tailed random graphs these two
problems are essentially asymptotically equivalent. But for arbitrary graphs, that
contain many small cycles, the decycling number can
be much larger than the dismantling one. We argue that, from the algorithmic point of view, decycling still
provides a very good basis for dismantling. For instance, consider a portion of
$N=532 000$ nodes of the Twitter network already analyzed in
\cite{morone_influence_2015}. The decycling solution found by MS
improves considerably the results obtained with the CI and EC heuristics
(see Fig.~\ref{fig:Twitter}). 

In a network that contains many short cycles cycles, decycling removes a large proportion of nodes expressly to destroy these short cycles. Many of these
nodes can be put back without increasing the size of the largest
component. For this reason we introduce a {\it reverse
greedy} (RG) procedure, in which, starting from a dismantled graph with
dismantling set $S$, maximum component size $C$ and a chosen target
value $C'>C$ for the maximum allowed component size, removed nodes are iteratively reinserted. At each step, among all removed nodes, the one which 
ends up in the smallest connected component is chosen for reinsertion (see Appendix~\ref{sec_greedy_reintroduction} for details). The computational cost of this operation is bounded by
$k_{\rm max} C' \log (k_{\rm max} C')$, where $k_{\rm max}$ is the maximal degree of the graph; the update cost is thus typically sublinear in $N$.

 In graphs where decycling is an optimal strategy for dismantling,
 such as the random graphs, a vanishing fraction of nodes can be reinserted by the RG procedure before the size of the largest component
 starts to grow steeply. For real-world networks, the RG procedure reinserts a considerable number of nodes, negligibly altering the size of the largest component. For the Twitter network
 in Fig.~\ref{fig:Twitter}, the improvement obtained by applying the
 RG procedure is impressive, $32\%$ fewer nodes
 for the CI method, and $20\%$ fewer nodes for the MS algorithm, which ends up being the
 best solution we found, removing only $3.4\%$ of nodes in order to
 dismantle into components smaller that $C=1000$ nodes. RG makes possible to reach, and even improve, the best result obtained with Simulated Annealing (SA) that solves the dismantling problem directly and is not affected by the presence of short loops (see Appendix~\ref{sec_otheralgo} for details on SA). 
 Qualitatively similar results are achieved on
 other real networks, e.g. on the Youtube network with $1.13$ million nodes \cite{snapnets} the best
 dismantling set we found with MS+RG included $4.0\%$ of nodes, this is
 a $22\%$ improvement with respect to the CI heuristics. 

The reverse-greedy procedure is introduced as a heuristic that
provides a considerable improvement for the examples we
treated. The theoretical results of this paper are valid only for
classes of graphs that do not contain many small cycles and hence our
theory does not provide a principled derivation nor analysis of the RG
procedure. This is an interesting open direction for future work.   
More detailed study (both theoretical and algorithmic) of dismantling of
networks for which decycling is not a reasonable starting point is an important direction of
future work. 

%\begin{acknowledgments}
%\paragraph*{Acknowledgments}
%  We are grateful to ..... for helpful discussions. .... have been
%  supported in part by the ......
%\end{acknowledgments}

%% PNAS does not support submission of supporting .tex files such as BibTeX.
%% Instead all references must be included in the article .tex document. 
%% If you currently use BibTeX, your bibliography is formed because the 
%% command \verb+\bibliography{}+ brings the <filename>.bbl file into your
%% .tex document. To conform to PNAS requirements, copy the reference listings
%% from your .bbl file and add them to the article .tex file, using the
%% bibliography environment described above.  

%%  Contact pnas@nas.edu if you need assistance with your
%%  bibliography.

% Sample bibliography item in PNAS format:
%% \bibitem{in-text reference} comma-separated author names up to 5,
%% for more than 5 authors use first author last name et al. (year published)
%% article title  {\it Journal Name} volume #: start page-end page.
%% ie,
% \bibitem{Neuhaus} Neuhaus J-M, Sitcher L, Meins F, Jr, Boller T (1991) 
% A short C-terminal sequence is necessary and sufficient for the
% targeting of chitinases to the plant vacuole. 
% {\it Proc Natl Acad Sci USA} 88:10362-10366.

%% Enter the largest bibliography number in the facing curly brackets
%% following \begin{thebibliography}

%\bibliography{Dismantling}{}
%\bibliographystyle{pnas2011}
We provide in the appendices further technical details
and additional results in support of the main text. They are organized as
follows. In Appendix~\ref{sec_NPcomplete} we prove the
NP-completeness of the dismantling decision problem. 
In Appendix.~\ref{sec_analytic} we extend the analytic results of the main
text, presenting the details of the cavity method computation of the
decycling number of random graphs (\ref{sec_cavity}), a lower bound
on the decycling number
valid for all graphs (\ref{sec_lowerbound}), and an expansion of the
decycling number for Erd\H os-R\'enyi random graphs close to their
percolation threshold and for large average degrees
(\ref{sec_dec_expansions}).
Appendix~\ref{sec_algorithms} is then devoted to several algorithmic
aspects: in~\ref{sec_MS}, \ref{sec_tree_breaking} and 
\ref{sec_greedy_reintroduction} we detail the three stages of our main 
algorithm (derivation of the Min-Sum equations, tree dismantling and greedy 
reintroduction of cycles respectively), 
while in \ref{sec_otheralgo} we give further
details on the other dismantling algorithms we have studied.
Finally in Appendix~\ref{sec_othergraphs} we provide further results on other
real-world and artificial scale-free networks.

\appendix

\section{Proof of NP-Completeness of the dismantling problem}
\label{sec_NPcomplete}

For our proof, we will employ the decisional (minimum) Vertex Cover problem, which is NP-Complete, and is defined as follows. Remember that a vertex cover is a subset of vertices $W\subset V$ such that for each $(i,j)$ in $E$, $i\in W$ or $j\in W$. 

\smallskip

\noindent \textsc{Vertex Cover}: Given a graph $G=(V,E)$ and $F\in\mathbb{N}$,
does a vertex cover $W\subset V$ with $|W|\leq F$ exist?

\smallskip

\noindent The \textsc{Vertex Cover} problem is NP-Complete.

\smallskip

\noindent \textsc{$C\left(N\right)$--Dismantling}: Given a graph
$G'=(V',E')$ and $F\in\mathbb{N}$, does a $C\left(\left|V'\right|\right)$--dismantling
set $S$ with $|S|\leq F$ of $G'$ exist?
\begin{thm}
Assume $C:\mathbb{N\to\mathbb{N}}$ to be a non-decreasing (polynomially
computable) function with $C\left(N\right)<N^{a}$ for $N\geq N_{0}$
with $0\leq a<1$. Then the $C\left(N\right)$--\textsc{Dismantling} problem
is NP-Complete.\end{thm}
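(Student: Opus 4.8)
The plan is to establish the two defining properties of NP-completeness separately: membership in NP, which is routine, and NP-hardness, which we obtain through a polynomial reduction from \textsc{Vertex Cover}. Membership in NP is immediate: given a candidate set $S$, we check $|S|\le F$, evaluate $C(|V'|)$ (polynomial-time by hypothesis), delete $S$, compute the connected components of $G'\setminus S$ by a linear-time graph search, and verify that each has at most $C(|V'|)$ vertices. Hence a dismantling set is a polynomially verifiable certificate.

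For NP-hardness the plan is to transform a \textsc{Vertex Cover} instance $(G=(V,E),F)$ with $|V|=n$ into a \textsc{$C(N)$--Dismantling} instance $(G',F)$ in which the target component size forces any small dismantling set to cover every edge of $G$. The gadget replaces each vertex $v\in V$ by a connected \emph{blob} $B_v$ of a common size $b$ (say a clique, with $v$ playing the role of an anchor inside $B_v$), and reconnects the blobs exactly along the edges of $G$, joining the anchors $u,v$ whenever $(u,v)\in E$. Writing $N'=|V'|=nb$, one wants $b$ to lie in the window $C(N')/2<b\le C(N')$: a single blob (size $b\le C(N')$) is an admissible component, whereas two blobs joined by a surviving edge (size $2b>C(N')$) are not. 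The equivalence, with the same bound $F$ on both sides, then reads as follows. A vertex cover $W$ of $G$, used as a set of removed anchors, dismantles $G'$ because every surviving blob is isolated and has size $\le b\le C(N')$. Conversely, given any dismantling set $S$, the set $W=\{v:\ S\cap B_v\neq\emptyset\}$ satisfies $|W|\le|S|$ (the blobs partition $V'$, so each touched blob charges one distinct vertex of $S$) and is a vertex cover: an edge $(u,v)$ with $B_u,B_v$ both untouched by $S$ would leave an intact component of size $2b>C(N')$, contradicting that $S$ dismantles $G'$.

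The main obstacle is the circular constraint: the admissible window for $b$ is expressed through $C(N')$, yet $N'=nb$ itself depends on $b$. Resolving this is exactly where the hypothesis $C(N)<N^a$ with $a<1$ is needed. I would set $b^\ast=\min\{b\ge 1:\ b>C(nb)\}$; sublinearity guarantees $C(nb)-b\to-\infty$, so $b^\ast$ is finite, while the nondegeneracy $C(n)\ge 1$ forces $b^\ast\ge 2$. Taking $b=b^\ast-1\ge 1$, minimality gives $b\le C(nb)$, and $C(nb^\ast)\le b^\ast-1=b$ together with monotonicity of $C$ gives $C(nb)\le b<2b$; hence $b$ falls in the desired window $C(N')/2<b\le C(N')$.

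It remains to check that the reduction is polynomial. From $b^\ast\le C(nb^\ast)+1<(nb^\ast)^a+1$ one extracts $b^{\ast\,1-a}=O(n^a)$, so $b^\ast=O(n^{a/(1-a)})$ and $N'=O(n^{1/(1-a)})$ is polynomial in $n$. Since $C$ is polynomially computable, $b^\ast$ is found by scanning $b=1,2,\dots$ and testing $b>C(nb)$, which halts after $O(n^{a/(1-a)})$ polynomial-time steps, and $G'$ is then built in polynomial time. Combining membership in NP with this reduction establishes that \textsc{$C(N)$--Dismantling} is NP-complete.
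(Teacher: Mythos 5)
Your proof is correct and follows essentially the same route as the paper's: a reduction from \textsc{Vertex Cover} in which each vertex is inflated into a connected gadget whose size $b$ is pinned, via a minimality argument resolving the circular dependence on $N'=nb$, into the window $b\le C(N')<2b$ guaranteed to exist by the sublinearity hypothesis $C(N)<N^a$, so that an uncovered edge forces a component of size $2b>C(N')$ while a vertex cover leaves only single blobs of size $\le C(N')$. The only differences are cosmetic: the paper uses a star (a vertex with $C'-1$ pendant leaves) where you use a clique, and it defines the blob size as $C'=\min\{k:C(kN)/k<2\}$ rather than your $b=b^\ast-1$ with $b^\ast=\min\{b:b>C(nb)\}$, but both choices yield the same window and the same projection argument in the reverse direction.
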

\begin{proof}
\textsc{$C\left(N\right)$--Dismantling} belongs clearly to NP. If
$C\equiv1$, one can see that \textsc{$C$--Dismantling} is identical
to \textsc{Vertex Cover} and is thus is NP--Complete. Otherwise, take
$N_{1}$ such that $C\left(N_{1}\right)\geq2$ and consider $N\geq\max\left\{ N_{0},N_{1}\right\} .$
Define 
\begin{equation}
C'=C'\left(N\right)=\min\left\{ k\in\mathbb{N}:\frac{C(kN)}{k}<2\right\} .\label{eq:min}
\end{equation}

\begin{rem*}
For constant $C\left(N\right)\equiv C\geq2$, then $C'\left(N\right)\equiv\frac{C+1}{2}$
if $C$ is odd, and $C'\left(N\right)\equiv\frac{C}{2}+1$ if $C$
is even. 
\end{rem*}
Note that $C'\geq2$ and $C'$ is polynomial in $N$: the value $k=\left\lceil \frac{1}{2}N^{a}\right\rceil ^{\frac{1}{1-a}}$
belongs to the set in the RHS of \eqref{eq:min}, as $k\geq\left(\frac{1}{2}N\right)^{\frac{a}{1-a}}$
so $k^{a-1}\leq\left(\frac{1}{2}N\right)^{-a}$ and then $\frac{C\left(kN\right)}{k}<\frac{\left(kN\right)^{a}}{k}=k^{a-1}N^{a}\leq2$;
so $C'\leq k=\left\lceil \frac{1}{2}N^{a}\right\rceil ^{\frac{1}{1-a}}$.
We will prove that 

\begin{eqnarray}
C'\leq & C\left(C'N\right) & <2C'.\label{eq:both}
\end{eqnarray}
The second inequality in \eqref{eq:both} follows from \eqref{eq:min}.
For the first inequality, 
\begin{eqnarray}
C\left(C'N\right) & \geq & C\left(\left(C'-1\right)N\right)\label{eq:first}\\
 & \geq & 2\left(C'-1\right)\label{eq:second}\\
 & \geq & C',\label{eq:third}
\end{eqnarray}
where \eqref{eq:first} follows from the fact that $C$ is non-decreasing,
\eqref{eq:second} follows from the minimality of $C'$ in its definition
\eqref{eq:min} and \eqref{eq:third} from the fact that $C'\geq2$. 

Now, take a graph $G$ with $\left|V\right|=N$, and construct $G'$
by adding $C'-1$ leaves to any vertex of $G$. Precisely, let $G'=(V',E')$
with $V'=V\cup\left\{ \ell_{v,i}:v\in V,i\in\{1,\dots,C'-1\right\} \}$
and $E'=E\cup\left\{ (v,\ell_{v,i}):v\in V,i\in\{1,\dots,C'-1\}\right\} $
where we assume the two unions to be disjoint. The number of vertices
of $G'$ is $\left|V'\right|=N'=NC'$ (which is polynomial in $N$).
The construction of $G'$ is clearly polynomial. 

Take any vertex cover $W$ of $G$. Then $W$ is a $C'$--dismantling
of $G'$: thanks to the vertex cover property, each $v\in V\setminus W$
can only be connected to the $C'-1$ extra leaves $\ell_{v,i}$. As
$C'\leq C\left(N'\right)$, then $W$ is also a $C\left(N'\right)$--dismantling
of $G'$.

Conversely, take any $C\left(N'\right)$--dismantling set $S$ of
$G'$. Define $\pi:V'\to V$ by $\pi\left(v\right)=v$ for $v\in V$
and $\pi\left(\ell_{v,i}\right)=v$ for $v\in V,1\leq i\leq C'-1$.
Consider the set $W=\pi\left(S\right)$. In short, $W$ is constructed
from $S$ by replacing all occurrences $\ell_{v,i}$ by $v$. Then
clearly $\left|W\right|=\left|\pi\left(S\right)\right|\leq\left|S\right|$
and $W$ is still a $C\left(N'\right)$--dismantling of $G'$: replacing
$\ell{}_{v,i}$ by $v$ introduces a new component $\left\{ \ell_{v,i}\right\} $
of size 1 but can only reduce the size of the other components. Moreover,
$W$ is also a vertex cover of $G$: suppose on the contrary that
it is not, and take an edge $(i,j)\in E$ such that $i,j\notin W$.
Then both vertices belong to a connected component of $G'\setminus W$
of size $2C'>C\left(N'\right)$, which contradicts the fact that $W$
was a $C\left(N'\right)$--dismantling of $G'$. Thus, $W$ must be
a vertex cover of $G$ with size no greater than $\left|S\right|$
and that proves the result. \end{proof}
\begin{cor}
For $C\left(N\right)=const$, $C\left(N\right)=\log N$, and $C\left(N\right)=N^{a}$
with $0\leq a<1$, $C\left(N\right)$--\textsc{Dismantling} is NP-Complete.\end{cor}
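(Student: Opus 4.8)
The plan is to prove NP-completeness by the standard two-part recipe: first establish membership in NP, then exhibit a polynomial-time reduction from \textsc{Vertex Cover}. Membership in NP is immediate, since a candidate dismantling set $S$ is a polynomial-size certificate and checking that $G' \setminus S$ has largest component of size at most $C(|V'|)$ requires only a linear-time connected-component computation together with one evaluation of the polynomially computable function $C$. The substance therefore lies entirely in the reduction.

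For the reduction, I would begin from the observation that when $C \equiv 1$ a dismantling set is precisely a set whose removal leaves only isolated vertices, i.e. a vertex cover; that case is settled at once, and I may assume $C$ eventually attains values $\ge 2$. The central difficulty is that $C$ is a function of the total number of vertices of the instance being built, not a fixed constant, so I cannot simply attach a fixed number of leaves. First I would calibrate the gadget size by defining $C'$ as the smallest $k$ with $C(kN)/k < 2$, and then verify two facts about it. First, $C'$ is bounded by a polynomial in $N$: here the hypothesis $C(N) < N^a$ with $a < 1$ is used to exhibit an explicit value of $k$ lying in the defining set, which makes the whole construction polynomial. Second, the double inequality $C' \le C(C'N) < 2C'$ holds, where the upper bound is the defining property of $C'$ and the lower bound follows from monotonicity of $C$ combined with the minimality in the definition and the fact that $C' \ge 2$.

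With $C'$ fixed, the reduction is a padding gadget: from a \textsc{Vertex Cover} instance $(G,F)$ with $|V| = N$, build $G'$ by attaching $C'-1$ fresh leaves to every vertex of $G$, so that $|V'| = N' = C'N$. Correctness then hinges on matching component sizes to the window $[C', 2C')$ carved out above. In the forward direction, a vertex cover $W$ of $G$ is a $C(N')$-dismantling set of $G'$: after removing $W$, each surviving original vertex is disconnected from the rest of $G$ and retains only its $C'-1$ private leaves, giving a component of size exactly $C' \le C(N')$. In the reverse direction, from any $C(N')$-dismantling set $S$ I would form $W$ by projecting each leaf onto its parent vertex, so that $|W| \le |S|$ and the threshold $F$ is preserved; this projection only shrinks components (it peels off singleton leaves), so $W$ is still dismantling, and it must moreover be a vertex cover, because any uncovered edge $(i,j)$ would leave $i$ and $j$ together with all their leaves inside a single component of size $2C' > C(N')$, contradicting the dismantling property.

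The step I expect to be the main obstacle is the calibration of $C'$ and the verification of $C' \le C(C'N) < 2C'$: everything downstream depends on this window being nonempty and on $C'$ being polynomially bounded, and it is exactly here that the structural hypotheses on $C$ (nondecreasing, sub-polynomial growth $C(N) < N^a$) are consumed. Once the arithmetic window is in place, the graph-theoretic content of the reduction, namely that covered edges produce components of size $C'$ while any uncovered edge produces one of size $2C'$, is routine.
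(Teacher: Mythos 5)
Your proof is correct and takes essentially the same route as the paper: the corollary is an immediate application of the paper's general theorem, and your argument reproduces that theorem's proof verbatim (reduction from \textsc{Vertex Cover} via the leaf-padding gadget calibrated by $C'=\min\{k:C(kN)/k<2\}$, with the window $C'\le C(C'N)<2C'$ and the two-directional correspondence between vertex covers and dismantling sets). The only step left implicit is the trivial check that $const$, $\log N$ and $N^{a}$ with $0\le a<1$ are non-decreasing, polynomially computable and eventually bounded by $N^{a'}$ for some $a'<1$, so they satisfy the theorem's hypotheses.
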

\begin{rem*}
$\left(N-k\right)$--\textsc{Dismantling }is polynomial for any constant
$k$. \end{rem*}

%%%%%%%%%%%%%%%%

\section{Analytic results}
\label{sec_analytic}
\subsection{Details on the cavity equations for the decycling number
  of random graphs}
\label{sec_cavity}
We give here some more details on the cavity method computation of the decycling
number of sparse random graphs, in particular on the derivation and solution of the BP equations. A full
derivation in a more general context can be found in~\cite{Guilhem14}.

When computing the typical free-energy of a large random
graph with degree distribution $q$ one has to determine the probability law $P(\eta)$ of the messages $\eta$, 
which is the solution of an integral equation of the form:
\beq
P(\eta) = \sum_{k=0}^\infty \tq_k \int \dd P(\eta^{(1)}) \dots \dd
P(\eta^{(k)}) \ \delta(\eta - f_k(\eta^{(1)},\dots,\eta^{(k)})) \ ,
\label{eq:popu}
\eeq
where $\tq_k=(k+1) q_{k+1}/\sum_{k} k q_k$ is the size-biased
distribution associated to $q$ (i.e. the probability of finding a
vertex of degree $k+1$ when choosing an edge uniformly at random), and
$f_k$ the function encoding the local BP equation, eq.~(\ref{eq:BPeq}) in the main
text, between messages
around a vertex of degree $k+1$. This type of equation can be
efficiently solved numerically via a population dynamics procedure, in
which $P$ is approximated by a large sample of representative values
of $\eta$, updated according to \eqref{eq:popu} until convergence to a
fixed point. The free-energy density of the model can then be computed
as the average with respect to $P$ of suitable functions of the
messages. In the present model these messages are real vectors of a
dimension which grows linearly with the parameter $T$ introduced above
as a cutoff on the allowed times in the leaf-removal dynamics.

In the Replica Symmetric version of the cavity method a message (or
field) $\eta$ of \eqref{eq:popu} corresponds to a $2T$ dimensional vector
of components denoted $(a_0,a_1,\dots,a_T,b_{T-1},\dots,b_1)$. The
function $f_k$ which gives $\eta$ as a function of
$\eta^{(1)},\dots,\eta^{(k)}$ reads explicitly:
\bea
e^{-\mu b_t} &=& 1 + e^{-\mu+\mu\overset{k}{\underset{i=1}{\sum}} (a_0^{(i)} -
  b_{t-1}^{(i)})} \ ,\\
e^{-\mu a_t} - e^{-\mu a_{t+1}} &=&  e^{-\mu+\mu\overset{k}{\underset{i=1}{\sum}} (a_0^{(i)} -
  b_t^{(i)})} 
%\times \\ && \times 
\sum_{i=1}^k \left( e^{\mu (b_t^{(i)} - a_{t+1}^{(i)}) } -
e^{\mu (b_t^{(i)} - a_{t+2}^{(i)}) }  \right) \ ,
\nonumber
\eea
with the conventions used to have more compact expressions: $b_0=0$,
$b_T=a_T$, $a_{T+1}=b_{T-1}$. Once the self-consistent equation on
$P(\eta)$ is solved the thermodynamic quantities are obtained as
follows. The limit of $(\ln Z)/N$ reads
\beq
\phi = \mu + \Exp[\ln z_{\rm site}(\eta^{(1)},\dots,\eta^{(k)})] -
\frac{d}{2} \, \Exp[\ln z_{\rm edge}(\eta^{(1)},\eta^{(2)})] \ , \nonumber
\eeq
where $\Exp[\cdot]$ denotes the average over the i.i.d. copies
$\eta^{(i)}$ drawn from $P(\eta)$ and over the integer $k$ drawn from the
degree distribution $q$, and $d$ is the mean of $q$. The two functions
$z_{\rm site}$ and $z_{\rm edge}$ arise from the local contributions
to the Bethe free-energy of sites and edges respectively, and read
\bea
z_{\rm site} &=& 1+ e^{-\mu+\mu\overset{k}{\underset{i=1}{\sum}}
  a_0^{(i)}} \left[
e^{-\mu\overset{k}{\underset{i=1}{\sum}} b_{T-1}^{(i)}} + 
%\right. \\ && \left.
\sum_{t=1}^T
e^{-\mu\overset{k}{\underset{i=1}{\sum}} b_{t-1}^{(i)}} 
\sum_{i=1}^k \left( e^{\mu (b_{t-1}^{(i)} - a_t^{(i)}) } -
e^{\mu (b_{t-1}^{(i)} - a_{t+1}^{(i)}) }  \right)
\right]  \ , %\nonumber
%\eea 
%\bea
\\
z_{\rm edge}  &=& e^{\mu (a_0^{(1)} + a_0^{(2)} )} \left[ 
e^{-\mu (b_T^{(1)} +b_T^{(2)}  )} + 
%\right. \\ && \left. 
\sum_{t=0}^{T-1} \left\{
(e^{-\mu a_t^{(1)}} - e^{-\mu a_{t+1}^{(1)}} ) e^{-\mu b_t^{(2)}} +
%\right. \right. \nonumber \\ &&  \left.  \left. 
(e^{-\mu a_t^{(2)}} - e^{-\mu a_{t+1}^{(2)}} ) e^{-\mu b_t^{(1)}} 
\right\}
\right] %\nonumber
\ .
\eea
The fraction $\theta$ of vertices included in the decycling sets
selected by the conjugated chemical potential $\mu$ and the entropy $s$
(the Legendre transform of $\phi$) then read:
\beq
\theta = \Exp \left[ \frac{1}{z_{\rm site}(\eta^{(1)},\dots,\eta^{(k)})}
\right] \ , \ \ \ s = \phi - \mu \, \theta \ .
\eeq
Varying the parameter $\mu$ one can compute in this way the entropy
$s(\theta)$ counting the exponential number of decycling sets
containing a fraction $\theta$ of vertices. The RS estimate of the
decycling number $\theta_{\rm dec}$ is then obtained as the point
where $s$ vanishes.

This estimate is, however, only a lower bound to the true value of
$\theta_{\rm dec}$ because of the effects of the replica symmetry
breaking. A more precise estimate is obtained by using the (energetic)
cavity method at the first level of replica symmetry breaking (1RSB),
in which the parameter $\mu$ is replaced by the Parisi breaking
parameter $y$; the message $\eta$ of \eqref{eq:popu} is then a vector
$(p_0,\dots,p_{T-1},r_T,\dots,r_0)$ constrained by the normalization
$p_0+\dots+p_{T-1}+r_T+\dots+r_0=1$ (hence the number of independent
parameters is again $2T$). These are updated according to 
\bea
p_t &=& \frac{1}{Z}e^y \, \tp_t  \ , \\
r_t &=& \frac{1}{Z}e^y \, \tr_t \ \ \text{for} \ \ t \ge 1 \ , \\
r_0 &=& \frac{1}{Z} \left( 1 - \sum_{t=0}^{T-1} \tp_t - \sum_{t=1}^T
  \tr_t \right) \ , \\ 
Z &=& 1 + (e^y -1) \left(  \sum_{t=0}^{T-1} \tp_t + \sum_{t=1}^T \tr_t
\right) \ , \\
\tp_t &=& \sum_{i=1}^k p_{t+1}^{(i)} \prod_{j \neq i}
\left(\sum_{t'=0}^t r_{t'}^{(j)} \right)\ , \\
\tr_t &=& \prod_{i=1}^k \left(\sum_{t'=0}^{t-1} r_{t'}^{(i)} \right) 
- \prod_{i=1}^k \left(\sum_{t'=0}^{t-2} r_{t'}^{(i)} \right) \ ,
\eea
with $p_T=r_T$ by convention.
One computes then a thermodynamic potential $\Phi(y)$ with a formula
similar to the one yielding $\phi(\mu)$ at the RS level, namely
\beq
\Phi = - y + \Exp[\ln \Z_{\rm site}(\eta^{(1)},\dots,\eta^{(k)})] -
\frac{d}{2} \, \Exp[\ln \Z_{\rm edge}(\eta^{(1)},\eta^{(2)})] \ , \nonumber
\eeq
with
\bea
\Z_{\rm site} &=& 1+ (e^y -1) 
\left[
\prod_{i=1}^k \left(\sum_{t=0}^{T-1} r_t^{(i)} \right)
+ 
%\right. \nonumber \\ && \left.
 \sum_{i=1}^k  \sum_{t=1}^T p_t^{(i)} \prod_{j\neq i} \left(\sum_{t=0}^{t-1} r_t^{(j)} \right)
\right] \ ,
%\nonumber
%\eea 
%\bea
\\
\Z_{\rm edge}  &=& e^{-y} + (1-e^{-y}) \left[ 
\left(\sum_{t=0}^T r_t^{(1)} \right) \left(\sum_{t=0}^T r_t^{(2)}
\right)
+ 
%\right. \nonumber \\ && \left.
\sum_{t=0}^{T-1} 
\left\{ p_t^{(1)} \left(\sum_{t'=0}^t r_{t'}^{(2)} \right) +
p_t^{(2)} \left(\sum_{t'=0}^t r_{t'}^{(1)} \right) \right\}
\right] \ . %\nonumber
\eea
The energetic complexity function (the equivalent of the entropy at
the 1RSB level) is then obtained by an inverse Legendre transform with
respect to $\Phi$, namely
\beq
\Sigma = \Phi + y \, \theta \ , \ \ \ \ \theta = 1- \Exp\left[\frac{\Z'_{\rm site}}{\Z_{\rm site}}\right] +
\frac{d}{2} \, \Exp\left[\frac{\Z'_{\rm edge}}{\Z_{\rm edge}}\right] \ , \nonumber
\eeq
where the prime denotes the derivative with respect to the explicit
dependence in $y$ of the expressions of $\Z_{\rm site}$ and $\Z_{\rm
  edge}$ given above. The 1RSB estimate of the decycling number is
then obtained from the criterion of cancellation of the complexity
$\Sigma$. Both the replica symmetric and 1RSB results for a range of
values of $T$ are reported in Table~\ref{tab:RS1RSB}. Extrapolating
the 1RSB estimate of the decycling number in the limit $T \to \infty$ leads to the values reported
in Table~\ref{tab:ER1RSB} in the main text.

The replica symmetric and 1RSB computations yield improving lower bounds on the decycling number of light-tailed random graphs, in the sense that $\theta^{\rm RS}(q) \le \theta^{\rm 1RSB}(q) \le \theta_{\rm dec}(q)$. For some degree distributions $q$ these inequalities become equalities (see~\cite{Guilhem14} for details on random regular graphs), for others the 1RSB estimate is strictly tighter than the RS one. It is probable that for some choices of $q$ the 1RSB estimate is not equal to the decycling number, whose exact determination would require the use of the so-called full RSB computation. The latter is not tractable numerically for models of sparse random graphs, we expect in any case the quantitative difference between the 1RSB and full RSB results to be rather small.

\begin{table}
\begin{tabular}{ccc}
$T$ & $\theta^{\rm RS}(T)$ & $\theta^{\rm 1RSB}(T)$ \\
\hline
\hline
3 &   0.22714 &  0.22797 \\
\hline
5 &   0.20042 &  0.20077\\
\hline
9 &   0.18507 &  0.18515 \\
\hline
13 & 0.18046 &  0.18051 \\
\hline
19 & 0.17795  & 0.17797 \\
\hline
30 & 0.17638  &  0.17638 \\
\hline
40 & 0.17590  &  0.17590 \\
\hline
50 & 0.17569  & 0.17569 \\
\hline
\end{tabular}
\caption{The $T$-dependence of the RS and 1RSB cavity predictions for the decycling number of
  Erd\"{o}s-R\'enyi random graphs of average degree $d=3.5$. The decycling
numbers reported in table~1 in the main text are obtained by
interpolation of the 1RSB
results to $T\to\infty$.}
\label{tab:RS1RSB}
\end{table}

\subsection{A simple lower bound}
\label{sec_lowerbound}

We present here a lower bound on the decycling number $\theta_{\rm
  dec}(G)$ valid for any graph $G$ (a similar reasoning can be found 
in~\cite{decycling_Beineke}), generalizing the bound $\theta_{\rm
  dec}(G)\ge \frac{d-2}{2(d-1)}$ for $d$-regular graphs.

We denote $k_i$ the degree of vertex $i$ and $M$ the number of edges. With
\beq
\la k \ra = \frac{1}{N} \sum_{i \in V} k_i
\eeq
the empirical average degree, one has $M = N \frac{\la k \ra}{2}$.

Consider now a subset $S$ of the vertices, and its complement $S^c = V \setminus S$. One can divide the edges in three categories, with $M=M_1 + M_2 + M_3$, where $M_1$ is the number of edges between two vertices of $S$, $M_2$ counting the edges between $S$ and $S^c$, and $M_3$ the edges inside $S^c$. One has
\beq
\sum_{i \in S} k_i = 2 M_1 + M_2 \ , \qquad \sum_{i \in S^c} k_i = 2 M_3 + M_2 \ , 
\eeq
and in particular
\beq
M_1 + M_2 \leq \sum_{i \in S} k_i \ .
\eeq
Suppose now that $S$ is a decycling set of the graph, in such a way that $S^c$ induces a forest. Hence one has
\beq
M_3 \leq |S^c| - 1 = N - |S| - 1 \ .
\eeq
Summing these two inequalities, and expressing $M$ in terms of the average degree, yields
\beq
\frac{1}{N} \sum_{i \in S} (k_i - 1) \ge \frac{\la k \ra}{2} - 1 + \frac{1}{N} \ .
\eeq
This inequality constrains the possible decycling sets. To obtain a
simpler lower bound on the size of the decycling sets, consider a
permutation $\sigma$ from $\{1,\dots,N\}$ to $V$ that orders the
vertices according to their degrees: $k_{\sigma(1)} \ge k_{\sigma(2)}
\ge \dots$. The inequality above can then be continued to get
\beq
\frac{1}{N} \sum_{i=1}^{|S|} (k_{\sigma(i)} - 1) \ge \frac{\la k \ra}{2} - 1 + \frac{1}{N} \ .
\label{eq_ineq}
\eeq
Let us call the left hand side of this inequality $l(\theta=|S|/N,G)$,
which is an increasing function of $\theta$, and define $\theta_{\rm lb}(G)$ as the smallest value of $\theta$ such that the inequality is fulfilled. Then the decycling number $\theta_{\rm dec}(G)$ of this graph is certainly lower-bounded by $\theta_{\rm lb}(G)$.

The shape of $l(\theta)$ can be described in terms of the empirical degree distribution of the graph,
\beq
\widehat{q}_k = \frac{1}{N} \sum_{i \in V} \delta_{k,k_i} \ .
\eeq
As the graph is finite so is its maximal degree, let us call it
$K$. One realizes easily that $l(\theta,G)$ is a piecewise linear
continuous increasing function, starting from 0 in $\theta=0$, linearly increasing on $\theta \in [0,\widehat{q}_K]$ with slope $K-1$, then again with a constant slope $K-2$ on the interval $\theta \in [\widehat{q}_K,\widehat{q}_K +\widehat{q}_{K-1}]$, and so on and so forth. It is thus more convenient to introduce two integrated quantities:
\beq
\widehat{Q}_k = \sum_{k'=k}^\infty \widehat{q}_{k'} \ , \qquad 
\widehat{T}_k = \sum_{k'=k}^\infty \widehat{q}_{k'} (k'-1) \ ,
\eeq
the summations being cut off at $K$ in this finite graph case. Indeed
for all $k$ one has $l(\widehat{Q}_k,G)=\widehat{T}_k$,
and the function $l(\theta,G)$ is the linear interpolation
between this discrete set of points. One can thus determine its intersection with the right hand side of (\ref{eq_ineq}) to compute the lower bound $\theta_{\rm lb}(G)$.

In the case of random graphs drawn with a degree distribution $q$ the
typical decycling number $\theta_{\rm dec}(q)$ can be lower-bounded as
above by replacing the empirical distribution $\widehat{q}$ by
$q$: $\theta_{\rm lb}(q) = l^{-1}\left(\frac{\la k \ra}{2} - 1 \right)$, where $\la k \ra$ is now averaged with respect to $q_k$, and $l$ is defined by replacing $\widehat{Q}_k$ and $\widehat{T}_k$ by their counterparts
\beq
Q_k = \sum_{k'=k}^\infty q_{k'} \ , \qquad 
T_k = \sum_{k'=k}^\infty q_{k'} (k'-1) \ .
\eeq

The numerical evaluation of this lower bound for a Poissonian random
graph of average degree $d=3.5$ yields $\theta_{\rm lb} = 0.141084$, not
that far from the 1RSB prediction $\theta_{\rm dec}=0.175$. The
lower bound matches the asymptotic expansion presented below when $d\to
1$ (i.e. close to the percolation threshold), while it reaches the
limit $1/2$ when $d$ diverges (the large $d$ limit of $\theta_{\rm dec}$ being 1).

\subsection{Decycling close to the percolation threshold and for large
degrees}
\label{sec_dec_expansions}
In addition to the numerical results obtained by the cavity method let us state analytical
asymptotic expansions for the decycling number of Poissonian random
graphs with average degree close to the percolation threshold
($d=1+\epsilon$) or very large ($d\to\infty$). Close to the
percolation a random graph is essentially made
of a 3-regular kernel of vertices joined by paths of degree 2 nodes;
decycling the kernel is sufficient to decycle the whole graph, and the
decycling number of a random 3-regular is
known~\cite{decycling_Wormald}, which yields
\beq
\theta_{\rm dec}(d=1+\epsilon) = \frac{1}{3} \epsilon^3 +
O(\epsilon^4) \ .
\eeq
On the other hand when $d$ is very large the Poissonian random graph
behaves like a regular graph (the degree distribution being
concentrated around its average), an asymptotic expansion in this case
was obtained in~\cite{Guilhem14} (in agreement with the rigorous
bounds of~\cite{lb_klequal_rig}), hence
\beq
\theta_{\rm dec}(d)= 1- \frac{2 \ln d}{d} - \frac{2}{d} + O\left(\frac{1}{d \ln d} \right) \ .
\eeq

\section{Algorithms}
\label{sec_algorithms}
\subsection{The Min-Sum algorithm and its implementation}
\label{sec_MS}

In this section we derive the Min-Sum (MS) algorithm introduced in eqs.~(\ref{eq:field1b-main}-\ref{eq:ms_h00}) of
the main text, that aims at finding decycling sets of the smallest
possible size. As explained in the main text this amounts to find the
unique minimum of the cost function $\sum_i \psi_i(t_i)$, with 
$\psi_{i}\left(t_{i}\right)=\mathbb{I}\left[t_{i}=0\right] +
\varepsilon_i(t_i)$, over the feasible configurations of the
activation times $\{t_i\}\in\{0,\dots,T\}^V$. These variables have to
fulfill the constraint that for all
vertices $i$ either $t_i=0$ (when $i$ belongs to the decycling set) or
it is determined by the adjacent variables according to
$t_i = 1+ {\rm max}_2(\{t_j\}_{j \in \partial i} )$. We recall that
the $\varepsilon_i(t_i)$ are infinitesimally small random variables
that are introduced to ensure the uniqueness of the minimum of the
cost function, and its closeness to one of the minima of the original
cost function. In practice we took $\epsilon_i$ to be uniformly random
between 0 and $10^{-7}$. 

It turns out to be easier to study a slight modification of this
optimization problem, with a relaxed constraint
\beq
t_i \ge 1+ {\rm max}_2(\{t_j\}_{j \in \partial i} ) \qquad \text{if} \
\ 0 < t_i \le T 
\label{eq_constraint1}
\eeq
that corresponds to a lazy version of the leaf removal algorithm, in
which a node can be removed once it became a leaf, but it is not
necessarily removed as soon as it could be. Thanks to the monotonicity
of the leaf removal procedure its final outcome, the 2-core of the
graph, is independent of the order in which the leaves are removed, and
of the parallel or sequential character of these updates. Hence the
optimization problem with the strict or relaxed constraints are
completely equivalent if $T \ge N$, the maximal number of possible
steps of the leaf removal. For smaller values of $T$ this equivalence
is not ensured, but the optimum with the relaxed constraints still
provides a valid decycling set. It will be useful in the following to
use the following logical equivalent of (\ref{eq_constraint1}),
\begin{equation}
  \sum_{j\in\partial i}\mathbb{I}\left[t_{j}\geq t_{i}\right]\leq
  1\qquad \text{if} \ 0<t_i\leq T \ .
\label{eq:relaxed}
\end{equation}
Our goal now is to compute the field $h_i(t_i)$ defined as the
minimum of the cost function over feasible configurations with a given
value of $t_i$, as indeed the unique minimum can be deduced from the
fields through $t_i^*={\rm argmin} \, h_i(t_i)$, and then the corresponding
decycling set is identified with the vertices $i$ where $t_i^*=0$. To
justify the MS heuristics for the approximate computation of the fields
$h_i$ on any graph it is simpler to consider first a tree graph, on
which the MS approach is exact for any local cost function. 
The function under consideration
here is the sum of local terms $\psi_i(t_i)$ on each vertex,
$h_i(t_i)$ can thus be decomposed as a sum of its own contribution $\psi_i$ and
of the contributions of the vertices in each of the subtrees rooted at
one of its neighbor $j \in \partial i$. Taking into account the constraint
(\ref{eq:relaxed}) for the positive times, denoted  $\mathcal C$ in the following, it yields
\begin{eqnarray}
  h_{i}\left(t_{i}\right) & = & \psi_{i}\left(t_{i}\right) + \min_{\{t_j\}_{j\in\partial
      i}:\mathcal{C}}\sum_{j\in\partial
    i}h_{ji}\left(t_{j},t_{i}\right)\qquad
  \text{for} \ 0 <t_i \le T \ ,
\label{eq:field1}\\
h_{i}\left(0\right) & = & \psi_{i}\left(0\right) + \sum_{j\in\partial
  i}\min_{t_{j}}h_{ji}\left(t_{j},0\right)\ , 
\label{eq:field2}
\end{eqnarray}
where $h_{ji}(t_j,t_i)$ are messages defined on each directed edge 
$j \to i$ of the graph, that give the minimum cost of the variables
in the subtree rooted at $j$ and excluding $i$, over the feasible
configurations with prescribed values of $t_j$ and $t_i$. Thanks to
the recursive structure of a tree these messages obey themselves
similar equations,
\begin{eqnarray}
  h_{ij}\left(t_{i},t_{j}\right) & = & \psi_{i}\left(t_{i}\right) + \min_{\{t_k\}_{k\in\partial i\setminus j}:\mathcal{C}}\sum_{k\in\partial i\setminus j}h_{ki}\left(t_{k},t_{i}\right) \qquad
  \text{for} \ 0 <t_i \le T \ , \label{eq:MS1}\\
h_{ij}\left(0,t_{j}\right) & = & \psi_{i}\left(0\right) +
\sum_{k\in\partial i\setminus j}\min_{t_{k}}h_{ki}\left(t_{k},0\right)
\ .
\end{eqnarray}
These Min-Sum equations involve $O(T^2)$ quantities for each edge of
the graph  because of the two time indices of the messages
$h_{ij}$. Fortunately this quadratic dependence on $T$ can be reduced
to a linear one by some further simplifications that we now explain.

As can be readily seen from \eqref{eq:MS1} and \eqref{eq:relaxed}, the dependence of $h_{ij}$ on $t_{j}$ 
is only through $\mathbb{I}\left[t_{j}<t_{i}\right]$. We will thus define 
\begin{equation}
h_{ij}\left(t_{i},t_{j}\right)=\begin{cases}
h_{ij}^{1}\left(t_{i}\right) & \mbox{ if }t_{j}<t_{i} \ , \\
h_{ij}^{0}\left(t_{i}\right) & \mbox{ if }t_{j}\geq t_{i} \ ,
\end{cases} 
\end{equation}
which gives a parametrization of each message with $O(T)$ real
numbers.

Calling $\mathcal{T}^s_{ij}=\{\left\{ t_{k}\right\} _{k\in\partial i\setminus j}:\underset{k\in\partial i\setminus j}{\sum}\mathbb{I}\left[t_{k}\geq t_{i}\right]= s\}$, Eq.~\eqref{eq:MS1} can be rewritten as follows for $0<t_i\leq T$:
\begin{eqnarray}
  h_{ij}^{0}\left(t_{i}\right) & = &
  \psi_{i}\left(t_{i}\right)+\min_{\mathcal{T}^0_{ij}}\sum_{k\in\partial
    i\setminus j}h_{ki}\left(t_{k},t_{i}\right)\nonumber \\
 & = & \psi_{i}\left(t_{i}\right)+\sum_{k\in\partial i\setminus
   j}\min_{t_{k}<t_{i}}h_{ki}\left(t_{k},t_{i}\right) \nonumber \\
 & = & \psi_{i}\left(t_{i}\right)+\sum_{k\in\partial i\setminus
   j}\min_{t_{k}<t_{i}}h_{ki}^{0}\left(t_{k}\right) \ , \label{eq:h0}
 \end{eqnarray}
as indeed $t_j \ge t_i$ in the definition of $h_{ij}^{0}$ all the other
removal times $t_k$ for $k\in \partial i \setminus j$ have to be
strictly smaller than $t_i$ for the condition \eqref{eq:relaxed} to be fulfilled.
On the other hand in the situation described by $h_{ij}^{1}$ at most
one of the removal times $t_k$ for $k\in \partial i \setminus j$ can
be greater or equal than $t_i$, hence for $0<t_i\leq T$:
\begin{eqnarray}
h_{ij}^{1}\left(t_{i}\right) & = &
\psi_{i}\left(t_{i}\right)+\min_{\mathcal{T}^0_{ij}\cup\mathcal{T}^1_{ij}}\sum_{k\in\partial
  i\setminus j}h_{ki}\left(t_{k},t_{i}\right) \nonumber \\
 & = & \psi_{i}\left(t_{i}\right)+\min\bigg\{\sum_{k\in\partial i\setminus j}\min_{t_{k}<t_{i}}h_{ki}^{0}\left(t_{k}\right),
\min_{\mathcal{T}^1_{ij}}\sum_{k\in\partial i\setminus
  j}h_{ki}\left(t_{k},t_{i}\right)\bigg\} \nonumber \\
 & = & \psi_{i}\left(t_{i}\right)+\min\bigg\{\sum_{k\in\partial i\setminus j}\min_{t_{k}<t_{i}}h_{ki}^{0}\left(t_{k}\right),
\min_{k\in\partial i\setminus j}\bigg[\min\left\{ h_{ki}^{0}\left(t_{i}\right),\min_{t_k>t_{i}}h_{ki}^{1}\left(t_k\right)\right\}
+\sum_{\ell\in\partial i\setminus j,k}\min_{t_{\ell}<t_{i}}h_{\ell
  i}^{0}\left(t_{\ell}\right)\bigg]\bigg\} \nonumber \\
 & = & \psi_{i}\left(t_{i}\right)+\sum_{k\in\partial i\setminus j}\min_{t_{k}<t_{i}}h_{ki}^{0}\left(t_{k}\right)+
\min\bigg\{0,\min_{k\in\partial i\setminus j}\bigg[\min\bigg\{ h_{ki}^{0}\left(t_{i}\right),
\min_{t_{k}>t_{i}}h_{ki}^{1}\left(t_{k}\right)\bigg\}-\min_{t_{k}<t_{i}}h_{ki}^{0}\left(t_{k}\right)\bigg]\bigg\}
\ .
\label{eq:h1}
\end{eqnarray}

The equations (\ref{eq:field1b-main}-\ref{eq:ms_h00}) of the main text can now be readily
obtained by defining the following quantities:
\begin{eqnarray}
L_{ki}\left(t_{i}\right) & = &
\min_{t_{k}<t_{i}}h_{ki}^{0}\left(t_{k}\right) \ , \label{eq:ms_L1}\\
R_{ki}\left(t_{i}\right) & = & \min\left\{
  h_{ki}^{0}\left(t_{i}\right),\min_{t_{k}>t_{i}}h_{ki}^{1}\left(t_{k}\right)\right\}
\ , \label{eq:ms_R1} \\
M_{ij}\left(t_{i}\right) & = & \min\left\{0,\min_{k\in\partial
    i\setminus j}
  \{R_{ki}(t_i)-L_{ki}(t_i)\}\right\}  \ , \label{eq:ms_M1}\\
\end{eqnarray}
in terms of which the equations (\ref{eq:h0},\ref{eq:h1}) can be
rewritten
\begin{eqnarray}
h_{ij}^{0}\left(t_{i}\right) & = & \psi_{i}\left(t_{i}\right)+ \sum_{k\in\partial i\setminus j} L_{ki}(t_i)  \qquad
  \text{for} \ 0 <t_i \le T \ , \label{eq:ms_h0t1}\\
h_{ij}^{1}\left(t_{i}\right) & = & \psi_{i}\left(t_{i}\right)+ \sum_{k\in\partial i\setminus j} L_{ki}(t_i)+M_{ij}\left(t_{i}\right) \label{eq:ms_h1t1}\qquad
  \text{for} \ 0 <t_i \le T \ , \\
h_{ij}^{0}\left(0\right) & = & \psi_{i}\left(0\right) +
\sum_{k\in\partial i\setminus j} R_{ki}(0)  \ , \label{eq:ms_h001}
\end{eqnarray}
the last equation corresponding to the unconstrained minimization over
the removal times of the neighbors of a vertex $i$ included in the
decycling set.

Let us give a more explicit interpretation of the quantities $L, R$ and of the last equations. $L_{ki}(t_i)$ is the minimum feasible cost in the subtree of $G\setminus i$ rooted in $k$ with the only condition that $t_k < t_i$ (see Eq.~\eqref{eq:ms_L1}). On the other hand, in Eq.~\eqref{eq:ms_R1} we define $R_{ki}(t_i)$ to be the minimum feasible cost in the subtree of $G\setminus i$ rooted in $k$ with $t_k\geq t_i$. As the message $h_{ij}^{1}\left(t_{i}\right)$ corresponds to a situation in which $j$ has already been removed at time $t_i$, one of the neighbors $k\in \partial i\setminus j$ can be removed after $i$. It follows that for $t_i>0$ the minimum feasible cost is given by the cost $\psi_{i}\left(t_{i}\right)$ plus the minimum between the minimum feasible cost when all neighbors $k\in \partial i\setminus j$ is removed before $i$  and the same quantity when one of the neighbors is allowed to be removed at a later time.

The field $h_i(t_i)$ is then obtained from the messages (see Eqs.~(\ref{eq:field1},\ref{eq:field2}))
according to 
\begin{eqnarray}
  h_{i}\left(t_{i}\right) & = & \psi_{i}\left(t_{i}\right) +
  \sum_{k\in\partial i} L_{ki}(t_i)+M_i\left(t_{i}\right) \ , \label{eq_field1bis}\\
  h_{i}\left(0\right) & = & \psi_{i}\left(0\right) +
  \sum_{k\in\partial i} R_{ki}(0)\ , \label{eq_field2bis} \\
M_i(t_i)&=&\min\{0,\min_{k\in\partial i}
\{R_{ki}(t_i)-L_{ki}(t_i)\}\} \ .
\end{eqnarray}

Finally a more efficient implementation can be devised, noting that
common quantities can be pre-computed in order to obtain the $h_{ij}$
for all the outgoing edges around a given vertex $i$. One indeed
obtains an implementation which runs in linear time both in $T$ and in
the degree $k_i$ of $i$ by defining
\begin{eqnarray}
S_i^0\left(t_{i}\right) & = & \sum_{k\in\partial i}L_{ki}\left(t_{i}\right) \ , \\
S_i^1 & = & \sum_{k\in\partial i}R_{ki}\left(0\right)\\
k_{i}\left(t_{i}\right) & \in & \arg\min_{k\in\partial i}\left\{ R_{ki}\left(t_{i}\right)-L_{ki}\left(t_{i}\right)\right\} \\
Q_{i}\left(t_{i}\right) & = & \min\left\{0,\min_{k\in\partial i\setminus k_{i}\left(t_{i}\right)}\left\{ R_{ki}\left(t_{i}\right)-L_{ki}\left(t_{i}\right)\right\}\right\} \\
M_{ij}\left(t_{i}\right) & = & \begin{cases}
  M_{i}\left(t_{i}\right)& \text{if} \ \ \ j \neq k_{i}\left(t_{i}\right)\\
  Q_{i}\left(t_{i}\right)& \text{if} \ \ \ \ensuremath{j=k_{i}\left(t_{i}\right)}
\end{cases}
\end{eqnarray}
which can all be computed in time $O\left(T k_i\right)$; we can then express the different values of the messages
$h_{ij}$ as 
\begin{eqnarray}
h_{ij}^{0}\left(t_{i}\right) & = & \psi_{i}\left(t_{i}\right)+S_i^0\left(t_{i}\right)-L_{ji}\left(t_{i}\right)\\
h_{ij}^{1}\left(t_{i}\right) & = &
\psi_{i}\left(t_{i}\right)+S_i^0\left(t_{i}\right)-L_{ji}\left(t_{i}\right)+M_{ij}\left(t_{i}\right) %\label{eq:ms_h1t}
\\
h_{ij}^{0}\left(0\right) & = &
\psi_{i}\left(0\right)+S_i^1-R_{ji}\left(0\right)
%h_{ij}^{1}\left(0\right) & = & \infty
\end{eqnarray}
which can be also computed in time $O\left(T\right)$ for each $j\in\partial i$.
The computation time for a complete iteration on all vertices $i$ is thus
$O\left(\sum_{i}k_{i}T\right)=O\left(\left|E\right|T\right)$. The
computation of the field $h_{i}\left(t_{i}\right)$ in \eqref{eq_field1bis}-\eqref{eq_field2bis}
is similar: 
\begin{eqnarray}
  h_{i}\left(t_{i}\right) & = &
  \psi_{i}\left(t_{i}\right)+S_i^0\left(t_{i}\right)+M_i\left(t_{i}\right)
  \qquad \text{for} \ \ 0<t_i\leq T \ ,\label{eq:field1b}\\
h_{i}\left(0\right) & = &
\psi_{i}\left(0\right)+S_i^1 \quad  \ .\label{eq:field2b}
\end{eqnarray}

\bigskip

This derivation of the Min-Sum equations shows that the algorithm is
exact on a tree: the recurrence equations on $h_{ij}$ are guaranteed
to converge, and the configuration $t_i^*$ obtained from the MS
expression of $h_i$ is the unique minimum of the cost function over
feasible configurations.
One can, however, iterate the recurrence equations on $h_{ij}$ for any
graph, and use the MS formalism as an heuristic algorithm that
provides a good approximation to the optimum, in particular when there
are not many short loops. There are, however, two issues with the
convergence of the message passing equations on $h_{ij}$:
\begin{itemize}
\item the $h_{ij}$ defined above are extensive energies, that would grow indefinitely in presence of loops in the graph. This problem is easily cured by adding a constant value $C_{ij}$ to all fields $h_{ij}(t_i,t_j)$, in such a way to keep the maximum entry of this matrix equal to a constant, for instance zero. This does not spoil the validity of the algorithm, as we only need informations about the relative energies of configurations to construct the decycling set: the optimum $t_i^*={\rm argmin} \, h_i(t_i)$ is obviously invariant by a shift of the reference energy.

\item even with this normalization the message passing equations are not guaranteed to converge. When they did not we enforced their convergence by employing a reinforcement procedure, that consists in taking $\psi_{i}\left(t_{i}\right)=\mathbb{I}\left[t_{i}=0\right]+\varepsilon_{i}\left(t_{i}\right)+\tau\gamma h_{i}\left(t_{i}\right)$
where $h_{i}$ is the local field computed with \eqref{eq:field1b}-\eqref{eq:field2b} in the previous iteration, $\gamma$ is a small real value
and $\tau$ is the iteration time. Typically we use in our simulation
$\gamma=10^{-3}$. 
\end{itemize}

\subsection{Tree-breaking in decycled graphs}
\label{sec_tree_breaking}

We explain now the second stage of our algorithm, namely the
dismantling of the acyclic graph obtained using MS in the first stage.

\subsubsection{Optimal tree breaking}

The computation of the $C$-dismantling number of a tree $G$ can be performed 
in a time growing polynomially with $C$ and with the size $N$ of the graph, by
the following dynamic programming approach.

Let us denote $G_{i\to j}$ the connected component of the vertex $i$
in the graph obtained from $G$ by removing one of its neighbors $j$, and
call $S_{ij}\left(c\right)$ the minimum number of vertices to be
removed from $G_{i\to j}$ to have that no component of the reduced graph is larger
than $C$ and that the component of $i$ is no larger than 
$c$. These quantities satisfy the following recursion: 

\begin{eqnarray*}
  S_{ij}\left(c\right) & = & \underset{\substack{\{c_k \}_{k\in\partial i\setminus j} \\ \underset{k\in\partial i\setminus j}{\sum} c_{k}\leq c-1}}{\min}
\sum_{k\in\partial i\setminus j} S_{ki}\left(c_{k}\right)\mbox{   if } 0< c\leq C \ , \\
  S_{ij}\left(0\right) & = & 1+\sum_{k\in\partial i\setminus j} S_{ki}\left(C\right) \ .
\end{eqnarray*}

Using max-convolutions (see e.g.~\cite{baldassi_max-sum_2015}) these quantities can be computed on all directed edges of the tree in time $O(N C^2)$. By adding an extra leaf $i'$ attached to a node $i$ on the tree, the quantity $S_{ii'}(C)$ gives the decycling number. A small modification can be used to also find optimal dismantling sets in time $O(N C^2)$. Even though polynomial, this complexity is often too expensive in practice even for moderate values of $C$. Fortunately we will see below a greedy strategy that achieves almost the same performance.

\subsubsection{Greedy tree-breaking}

An alternative approach to the dismantling of a forest is to follow a
greedy heuristic, removing iteratively the node in the largest
connected component  of the forest (i.e. a tree) that leaves the
smallest largest component. This procedure is guaranteed to
$C$-dismantle the forest by removing $S=N/C$ vertices or less. This
ensures the dismantling to a sublinear size of the largest component
$C\leq N/\log N$ by removing a sublinear number of vertices $S\leq
\log N$. Moreover, it can be implemented in time $O(N(\log N + T))$
where $T$ is the maximal diameter of the trees inside this forest. 
The worst case in terms of number of removed nodes is reached in the case of a one-dimensional chain, in which one needs to remove $S=2^k-1$ nodes to obtain components of size $C(S)\leq N/2^k$.

For a given tree $G$ on $N$ vertices, let us call $F$ the subset of vertices $i$ which are
optimal in the above sense, namely such that the removal of $i$ from
$G$ minimizes the size of the largest component of $G\setminus i$.
The elements of $F$ can be characterized in a very simple way. Denote
by $C(i)$ the size of the largest component of $G\setminus i$, so
$C(i)=\max_{j\in \partial i} |G_{j\to i}|$ and $F=\arg\min_{i\in
  V}C(i)$. Then $i^\star\in F$ if and only if $C(i^\star)\leq
N/2$. Suppose indeed that for $i^\star\in F$, $C(i^\star)>N/2$ and
take $j\in\partial i$ such that $|G_{j\to i^\star}|=C(i^\star)$. Then,
as $|G_{i^\star\to j}| + |G_{j\to i^\star}| = N$, we have that
$C(j)<\max\{N/2, C(i^\star)\}=C(i^\star)$ which is absurd. Conversely,
suppose that $C(i)\leq N/2$ and take $i^\star\in F\setminus i$. Consider the unique path $(i, k_1, \dots, k_n, i^\star)$ in $G$. Then $|G_{k_1\to i}|\leq C(i)$, and $|G_{i\to k_1}|\geq N-C(i)\geq N/2$. But $G_{i\to k_1}\subseteq G_{k_n\to i^\star}$ so $C(i^\star)\geq N/2$.

This characterization of $F$ can be used constructively to find an
$i^\star\in F$ efficiently. Pick for each connected component of the
initial forest a ``root'' vertex $i_0\in V$. For each $i$ compute
$w_{i}=|G_{i\to j}|$ where $j$ is the unique neighbor of $i$ on the
path between $i$ and the root $i_0$,
starting from the leaves and exploiting the relation
$w_{i}=1+\sum_{k\in \partial i\setminus j} w_{k}$;
note that $C(i_0)=\max_{j\in\partial i_0} w_{j}$.
 Place $i_0$ into a priority queue with priority given by the
 component size $K(i_0)=1+\sum_{j\in\partial i_0} w_j$. Iteratively
 pick the largest component from the queue. Then construct the sequence $i_t$ as follows: for every $t$, if $C(i_t)\leq N/2$, then $i^\star=i_t\in F$ and the process stops. Otherwise, iteratively choose $i_{t+1}$ such that $w_{i_{t+1}} = C(i_t) > N/2$. 

Once $i^\star$ is chosen and removed, the component is broken into $|\partial i^\star|$ components, each one rooted at $k\in\partial i^\star$. From these, only the component rooted at $i_{t-1}$ needs to have its $w$ values updated, as its orientation changed. The only needed adjustments are along the path $i_0,i_1,\dots,i_t$ and can computed in time proportional to $t$, which is bounded by the diameter of the tree, which is in turn bounded by $T$.

As the cost of the priority queue updates scale as $O(\log N)$, the
total number of operations for each vertex removal is thus $O(\log N +
T)$, hence the total number of operations for greedily dismantling a
forest scales as $O(N\cdot(\log N + T))$, as claimed above.

We performed an extensive comparison of the optimal and greedy
procedure for values of $C$ sufficiently small for the optimal one to
be doable in a reasonable time, using as a benchmark the forest output
by the MS algorithm applied to an Erd\"os-R\'enyi random graph of 78125
nodes and average degree 3.5. As shown in
Fig.~\ref{fig:optimal-greedy} we found the greedy strategy to have
very close to optimal performances, therefore we used this much faster
procedure in all other numerical simulations.

\begin{figure}
  \includegraphics[width=0.5\columnwidth]{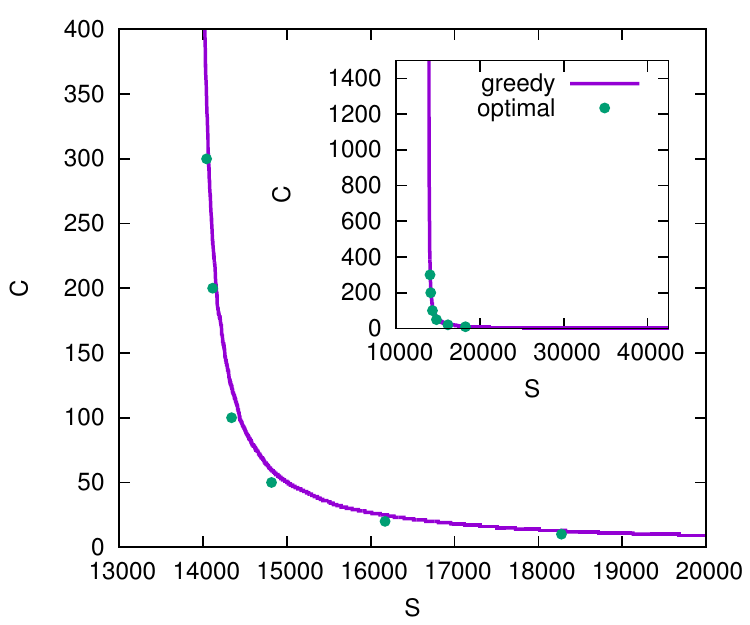}
  \caption{Comparison between greedy and optimal forest-breaking on
    the output of the MS algorithm for an Erd\"os-R\'enyi random graph
    with $78,125$ vertices and average degree $3.5$.\label{fig:optimal-greedy}}
\end{figure}

\subsection{Greedy reintroduction of cycles}
\label{sec_greedy_reintroduction}

The initial condition for the reverse greedy procedure is the graph obtained after the removal from $G$ of a set $S^{0}$ of nodes (dismantling set) and characterized by largest connected components of size $C$. Let us consider a target value $C'>C$ for the size of the largest connected components. 
As long as the size of the largest connected components in the graph is smaller than $C'$, the removed nodes are reintroduced one at a time by means of the following greedy strategy: at each iteration step $t$, we choose for reinsertion the node $i \in S^{t}$ (and the edges to
vertices in $V\setminus S^{t}$) such that the connected component $V_i^{t}$ the node $i$ ends up in is the smallest possible. 
An efficient implementation of the greedy reinsertion is easily obtained by maintaining a priority queue of the 
removed vertices with priority given by the size $|V_i^{t}|$. When a vertex $i$ is reintroduced in the graph, the number of connected components that get merged is at most equal to the degree $k_i$ of the vertex (in the original graph $G$). The number of elements in the priority queue that have to be modified after the reinsertion of $i$ is bounded by the number of nodes $j \in S^{t}$ that are connected to the new component in the original graph $G$. As the size of the largest component is at most $C'$, this number is at most $k_{\rm max} C'$, where $k_{\rm max}$ is the maximal degree of the graph. The computational cost of reintroducing a vertex in the graph is entirely given by the one of updating the queue, which is thus bounded by $k_{\rm max} C' \log (k_{\rm max} C')$. In a sparse graph, the update cost is thus typically sublinear in $N$, making the reverse greedy strategy very efficient.

\subsection{Competing algorithms}
\label{sec_otheralgo}
\subsubsection{Simulated Annealing} 

Besides our main algorithm based on the Min-Sum procedure we have
studied the network dismantling problem using simulated
annealing, i.e. building a Monte Carlo Markov Chain that makes a
random walk in the space of configurations of the subsets $S \subset V$
of removed vertices. We assign an energy to each configuration
according to 
\begin{equation}
\mathcal{E} = |S| \nu + C,
\end{equation}
in which $|S|$ is the number of removed nodes, $C$ the size of the
largest connected component in the graph obtained by removing $S$, and
$\nu$ is a free parameter. Note indeed that a set $S$ of removed
vertices can be considered ``good'' for two reasons: either because it
is small, or because its removal fragments the graph into small
components. These two figures of merits obviously contradict each
other and cannot be optimized simultaneously, $\nu$ thus controls the
balance between these two frustrating goals.

As usual in simulated annealing algorithms we introduce an inverse
temperature $\beta$ that is slowly increased during the evolution of
the Markov Chain, and at each time step $t$ we consider a move from the
current configuration $S^t$ to a new configuration $S_{\rm new}$, that
is accepted according to a standard Metropolis criterion, i.e. with probability 
\begin{equation}
\min{\left[1, e^{-\beta\left[ \mathcal{E}_{\rm new} -  \mathcal{E}^t
      \right]}\right]} \ ,
\end{equation}
where $\mathcal{E}_{\rm new}$ and $\mathcal{E}^t$ are the energies of
$S_{\rm new}$ and $S^t$ respectively. If the move is accepted we set
$S^{t+1}=S_{\rm new}$, $\mathcal{E}^{t+1}=\mathcal{E}_{\rm new}$,
otherwise the Markov Chain remains in the same configuration.
The proposed configuration $S_{\rm new}$ is constructed in the
following way at each time step: a node $i$ is chosen uniformly at
random among all the $N$ vertices of the graph, and its status is
reversed (if $i \in S^t$ then $S_{\rm new}=S^t \setminus i$, if $i
\notin S^t$ then $S_{\rm new}=S^t \cup \{i\}$). We then need to
compute the energy $\mathcal{E}_{\rm new}$ of this proposed
configuration; the first term is easily dealt with as $|S|$ varies by
$\pm 1$ depending on whether $i \in S^t$ or not. We thus only need to
compute the size $C_{\rm new}$ of the largest component in the new
configuration, facing three possible cases:
\begin{enumerate}
\item[1.] if $i \notin S^t$ and $i$ belongs to the largest
  component of $G \setminus S^t$ the size $C_{\rm new}$ of the largest component is recomputed;
\item[2.] if $i \notin S^t$ but $i$ does not belong to the largest
  component of $G \setminus S^t$ then $C_{\rm new}=C^t$ does not change;
\item[3.] if  $i\in S^t$ then it is only necessary to compute the size $C_i$ of the cluster $i$ belongs to once it is reintroduced in the graph and compare the latter with the current largest component, i.e. $C_{\rm new} = \max(C^t,C_i)$.
\end{enumerate}

The Markov chain is irreducible, recurrent and aperiodic, thus ergodic
and the Metropolis criterion ensures detailed balance, therefore
the SA algorithm would sample correctly the probability measure
$\propto e^{-\beta \mathcal{E}}$ if run with an infinitesimally small annealing
velocity. Unlike standard applications of simulated annealing, such as spin systems with short-range interactions, in the
present problem a single move (node removal/reintroduction) can
produce energy variations over a large range of scales, with the
consequence that there is no natural criterion to choose the annealing
protocol. We tested several different
annealing protocols and we adopted one in which the inverse
temperature $\beta$ is increased linearly from $\beta_{\rm min}$
to $\beta_{\rm max}$ (thus concentrating the measure on close to
ground-state configurations), with an increment of $d\beta$ at each time step (i.e. after each one attempted move). Protocols in which the inverse temperature is varied only after $O(N)$ attempted moves were also considered, with no relevant difference in the results. Similarly, there is no natural choice of the initial conditions. We tested the cases in which the initial set $S^0$ is empty and in which nodes are randomly assigned to $S^0$ independently with probability $1/2$, but for sufficiently small values of $\beta_{\rm min}$, different choices had no relevant effects on the optimization process.  

Fig.~\ref{fig:MCscaling} displays the minimum energy achieved using
the SA algorithm (with $\nu=0.6$) on Erd\"{o}s-R\'enyi random graphs
of average degree $d=3.5$ and increasing sizes from $N=1024$ to
$N=16384$. For comparison we also plot the results obtained using the
Min-Sum algorithm (horizontal lines). For small sizes, the SA
algorithm outperforms Min-Sum when the annealing scheme is
sufficiently slow ($d\beta$ very small). 
 Increasing $N$, the quality of the results obtained with SA degrades,
 as it would require an increasingly slower annealing protocol in
 order to achieve the same results obtained using Min-Sum. These
 results show that, even though the SA implementation proposed is
 simple and relatively fast even on large networks, the necessity of
 an increasingly slower annealing protocol prevents SA from reaching
 optimal results in a reasonable computational time.   

The results for simulated annealing presented in Fig.~1 and Fig.~4 of the main text are obtained 
 with parameters $d\beta=10^{-8}$, $\beta_{\rm min}=0.5$ for both, $\beta_{\rm max} = 20,$ $\nu= 1.2$ for Fig.~1, and $\beta_{\rm max} =10$, $\nu=2.0$ for Fig.~4. 

\begin{figure}[tbh]
\begin{center}
\includegraphics[width=0.5\columnwidth]{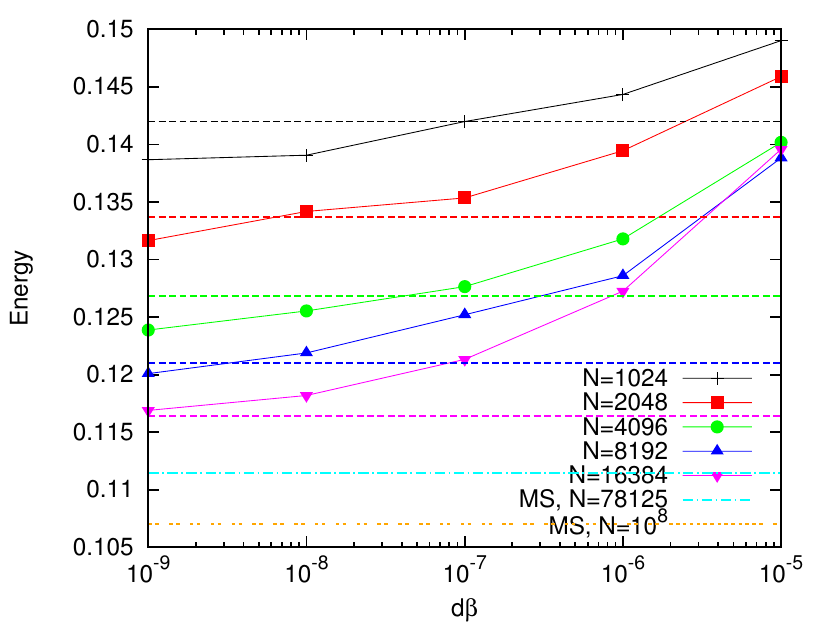}
\end{center}
\caption{Energy $\mathcal{E}$ as function of the rate $d\beta$ used to increase $\beta$ from $\beta_{\rm min}=0.5$ to $\beta_{\rm max}=30$ in the SA algorithm ($\nu=0.6$) used to dismantle Erd\"{o}s-R\'enyi random graphs of average degree $d=3.5$ and increasing size from $N=1024$ (top) to $N=16384$ (bottom). We also report the minimum energy achieved with the Min-Sum (MS) algorithm on the same graphs and for graphs of larger sizes $N=78125$ (cyan line) and $N=10^8$ (orange line). 
 \label{fig:MCscaling}} 
\end{figure}

\subsubsection{Score-based algorithms}

Let us give here more details about the other dismantling algorithms to which we
compared our own proposals. They all proceed by the (irreversible)
removal of nodes from the graphs to be dismantled, the differences
between them relying in the choice of a score function that assigns to
each vertex $i$ of the graph a score $e_i$, the vertices being removed
in the order of decreasing scores (with random choices in case of
ties). This quantity $e_i$ should be an
heuristic measure of the importance, or centrality, of the vertex $i$,
in the sense that more central nodes should lead to a larger decrease
in the size of the largest component when $i$ is removed. We have 
investigated the following score functions, the names corresponding to
the key in the figures:
\begin{itemize}
\item RND, $e_i=1$ for all $i$; this leads to a random choice of the
  removed vertices, i.e. to classical site percolation.
\item DEG, $e_i=k_i$ the degree of node $i$, this corresponds to removing
  the highest degree nodes first.
\item EC, for eigenvector centrality, uses as a score the
  eigenvector $e_i$ associated to
the largest eigenvalue $\lambda$ of the adjacency matrix
$A_{ij}=\mathbb{I}[\langle i,j \rangle \in E]$ of the graph, in other
words the solution of the linear system of equations 
\begin{equation}\label{ec}
\lambda \, e_i = \sum_{j\in \partial i} e_j \ .
\end{equation}
For a connected graph the
Perron-Frobenius theorem ensures that this eigenvector is unique and that
it can be chosen positive. 

\item ${\rm CI}_\ell$, for collective influence at level $\ell$, is a centrality measure introduced
  by Morone and Makse \cite{morone_influence_2015} to provide a
  heuristic measure of the influence that a node has on the neighbors
  within a certain distance $\ell$ from it. The collective influence
  of node $i$ at level $\ell$ is defined as
\begin{equation}
{\rm CI}_\ell(i) = (k_i-1) \sum_{j\in\partial B(i,\ell)} (k_j-1)
\end{equation}   
where $\partial B(i,\ell)$ denotes the set formed by all the nodes
that are at distance $\ell$ from node $i$ \cite{morone_influence_2015}. 
The CI value of node $i$ takes two contributions, the degree of node
$i$ and the number of edges emerging at distance $\ell$ from a ball
surrounding $i$. On expander graphs, such as random graphs, the number
of nodes contained in a ball $B(i,\ell)$ grows exponentially with
$\ell$, hence the calculation of the collective influence scores for
all nodes of the graph becomes computationally demanding already for
moderately small distance values ($\ell = 4,5$). 

\end{itemize}

We also made some tests with the score function defined as the
betweenness centrality and as the non-backtracking centrality~\cite{NewmanCentrality},
but for the graphs we considered the results we obtained were both
qualitatively and quantitatively similar (or worse) to those obtained
using EC, hence we do not report them.

For a given score function one can envision different ways to
implement the dismantling algorithm; the simplest would be to compute
the scores for all vertices of the original graph, and then to remove
the vertices in the order defined by this ranking. We used instead an
adaptive version, which gives much better results, that consist in
recomputing the scores of all remaining vertices after each removal of
the node with highest score in the current graph; all the results
presented in the main text and the Appendices have been obtained in this way. 
Even if it performs better
this adaptive strategy is also much more computationally demanding; an
intermediate compromise between these two extreme strategies would be
to recompute the scores only after a finite fraction $x$ of nodes is removed. 
Another implementation twist consists in recomputing the scores only
for the vertices belonging to the currently largest connected
components, as the removal of a vertex outside it would not decrease
the size of the largest component. This is useful in particular if one
tries to compute the EC scores by the power method (multiplying
several times an initial guess by the adjacency matrix); instead of
the full adjacency matrix one can consider only the submatrix
corresponding to the vertices in the largest component. By
construction this submatrix is irreducible and the power method will
converge, hence solving the possible convergence issues encountered by
the power method in the case of coexistence of several
connected components in the graph. The restriction to the largest
component modifies also the behavior of the DEG heuristic, as it
avoids the removal of large degree nodes in already small components.

\section{Other real-world and scale-free graphs}
\label{sec_othergraphs}

\begin{figure}[th]
\begin{center}
\includegraphics[width=0.5\columnwidth]{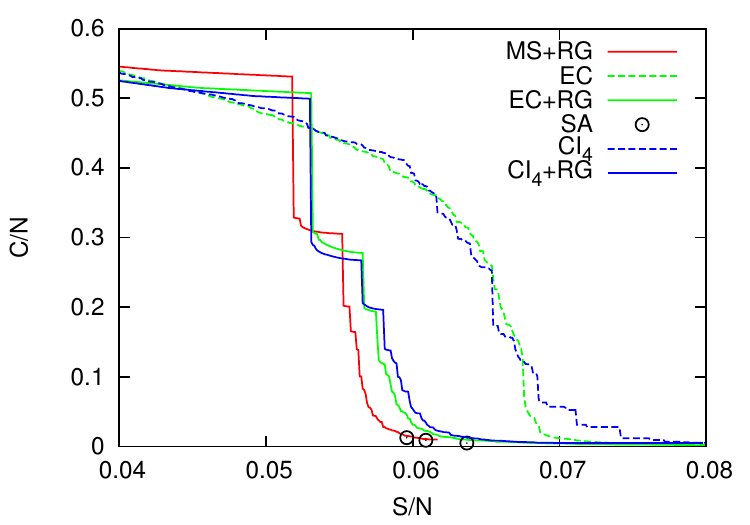}
\caption{Size of the largest component in scale-free random graphs, of
  size $N=10^4$ nodes and degree distribution $P(d) \propto
  d^{-\gamma}$ with $\gamma=2.5$, as function of the fraction of
  removed nodes $S$. The nodes are removed using adaptive Eigenvector
  Centrality (EC), adaptive Eigenvector Centrality plus Reverse Greedy
  (EC+RG),  Collective Influence with diameter $\ell=4$ (${\rm
    CI}_4$), the same plus Reverse Greedy (${\rm CI}_4$+RG), Min-Sum plus Reverse Greedy (MS+RG), and Simulated Annealing (SA) with $d\beta=10^{-8}$ and $\beta_{\rm min}=0.5, \beta_{\rm max}=20$ (and several values of $\nu$).
 \label{fig:SF}} 
\end{center}
%\vspace{-0.8cm}
\end{figure}

We already explained in the main
text that dismantling a graph by means of the decycling (plus greedy
tree breaking) is guaranteed to be optimal only for sparse random
graphs with locally tree-like structure. Nevertheless, we observed
that when the algorithm is complemented by a simple reverse greedy
(RG) strategy the final result is usually very good also on networks
in which many small loops are present, such as in the case of the
Twitter graph in Fig.~\ref{fig:Twitter} in the main text. Our way to state the quality
of the result is the direct comparison with the other available
algorithms, that are the Simulated Annealing algorithm and the other
heuristics (e.g. EC, CI) also complemented by the RG strategy.

\begin{figure}[thb]
\begin{center}
\includegraphics[width=0.5\columnwidth]{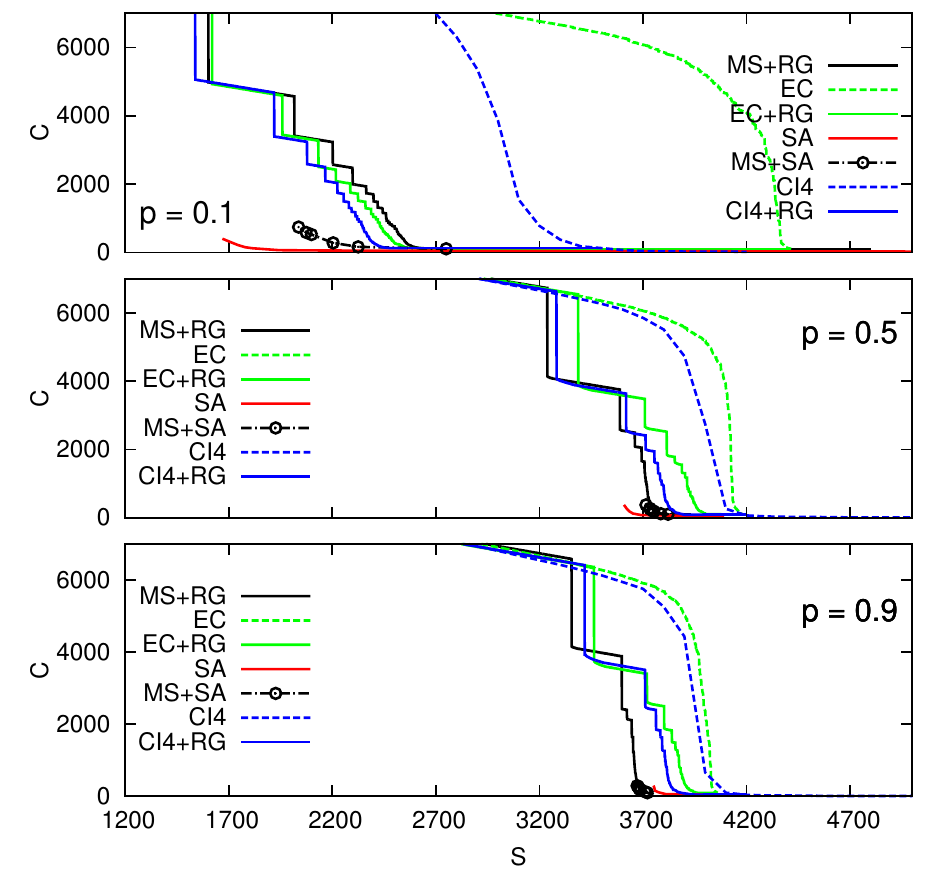}
\caption{Size of the largest component in WS small-world networks with rewiring probability $p=0.1,0.5,0.9$ achieved
  by removing a fraction of nodes using Eigenvector Centrality (EC), Eigenvector Centrality plus Reverse Greedy (EC+RG),  Collective Influence with diameter $\ell=4$ (CI4), Collective Influence with diameter $\ell=4$ plus Reverse Greedy (CI4+RG), Min-Sum (MS), Min-Sum plus Reverse Greedy (MS+RG), Min-Sum plus SA (MS+SA),  and Simulated Annealing (SA) with $d\beta=10^{-8}$ and $\beta_{\rm min}=0.5, \beta_{\rm max}=20$ (and several values of $\nu$).
 \label{fig:WS}} 
\end{center}
\end{figure}

We studied dismantling in the youtube network \cite{snapnets} with
$1.13$ million nodes and
concluded that the reverse greedy is of immense importance
here. Specifically we obtained that in order to dismantle the network
into components smaller that $C=1000$ nodes the CI methods removes $5.12\%$,
the ER removes $4.97\%$, the MS removes $5.67\%$ nodes. The reverse greedy
procedure improves all the these methods and gets dismantling sizes
$4.03\%$ for CI+RG, $4.07\%$ for EC+RG, and $3.97\%$ for MS+RG.

We also studied dismantling on an example of a synthetic scale-free
network. Results reported in Fig.~\ref{fig:SF}, are
qualitatively comparable to the ones for real networks.

In order to better quantify the effect of a large clustering coefficient on the different algorithmic methods under study, we
considered a well-known class of random graphs with tunable clustering
coefficient, the small-world network model introduced by Watts and Strogatz \cite{watts1998collective}. The WS network is generated starting from a one-dimensional lattice in which every node is connected with $d/2$ nearest-neighbors on both sides, then each edge $(i,j)$ with $i>j$ is rewired to a randomly chosen node $k\ne j$ with probability $p$. Fig.~\ref{fig:WS} shows the result of dismantling WS networks of size $N=10^4$, $d=6$ and rewiring probability $p=0.1,0.5,0.9$. For $p=0.9$ the WS network is topological similar to a random graph, with very small clustering coefficient,  because almost all edges have been rewired. On this network, Min-Sum plus reverse greedy outperforms centrality-based heuristics (EC+RG and CI+RG) and gives results that are comparable with the best obtained using SA. For $p=0.5$, MS+RG still gives a very good result, only slightly worse than SA. We also replaced the reverse greedy procedure with a reverse Monte Carlo method, in which a dismantling set is sought by performing the SA algorithm from the solution of the MS algorithm, by keeping only an optimal subset of the nodes already removed. The replacement of the reverse greedy procedure with a Monte Carlo based method gives improved results for both $p=0.9$ and $p=0.5$. We stress that this could be another useful strategy to improve heuristic results even in large networks, because the SA algorithm runs on a fraction of the original graph.

When $p$ is further decreased, the structure of the WS network significantly departs from  that of a random graph and short loops start to play a very important role, it is clear that we do not expect decycling to be a good strategy for dismantling in this regime. In this regime SA performs about 30\% better than any other algorithm, even though complemented with the reverse greedy strategy. When we perform SA from the solutions obtained using MS, the results are improved but still far from the best results obtained using SA alone. This is due to the fact that, in clustered networks, the dismantling set obtained by SA is not a subset of the dismantling set obtained using any other heuristic strategy, with an overlap that is usually small.

%{\bf All this up to here should fit in 6 pages}

%%%%%%%%%%%%%%%%%%%%%%%%%%%%%%%%%%%%%%%%%%%%%%%%%%%%%%%%%%%%%%%%

%% Adding Figure and Table References
%% Be sure to add figures and tables after \end{article}
%% and before \end{document}

%% For figures, put the caption below the illustration.
%%
%% \begin{figure}
%% \caption{Almost Sharp Front}\label{afoto}
%% \end{figure}

%% For Tables, put caption above table
%%
%% Table caption should start with a capital letter, continue with lower case
%% and not have a period at the end
%% Using @{\vrule height ?? depth ?? width0pt} in the tabular preamble will
%% keep that much space between every line in the table.

%% \begin{table}
%% \caption{Repeat length of longer allele by age of onset class}
%% \begin{tabular}{@{\vrule height 10.5pt depth4pt  width0pt}lrcccc}
%% table text
%% \end{tabular}
%% \end{table}

%% For two column figures and tables, use the following:

%% \begin{figure*}
%% \caption{Almost Sharp Front}\label{afoto}
%% \end{figure*}

%% \begin{table*}
%% \caption{Repeat length of longer allele by age of onset class}
%% \begin{tabular}{ccc}
%% table text
%% \end{tabular}
%% \end{table*}

%Authors must submit a 120-word-maximum statement about the significance of their research paper written at a level understandable to an undergraduate-educated scientist outside their field of specialty. The primary goal of the Significance Statement is to explain the relevance of the work in broad context to a broad readership. The Significance Statement appears in the paper itself and is required for all research papers. 

\end{document}